%% file: main.tex
\documentclass[runningheads]{llncs}
\usepackage[T1]{fontenc}
\usepackage{graphicx}
\usepackage{mathpartir}
\usepackage{amsmath}
\usepackage{amssymb}
\usepackage{listings}
\usepackage{todonotes}
\usepackage{lstautogobble}
\usepackage{ifsym}
\usepackage{subcaption}
\usepackage{hyperref}
\usepackage[ruled, vlined, linesnumbered]{algorithm2e}
\usepackage{cleveref}
\usepackage{xspace}
\usepackage{stmaryrd}
\usepackage{orcidlink}
\usepackage{ifthen}
\usepackage{mathtools}
\usepackage{wrapfig}
\usepackage{marvosym}

\newboolean{fullversion}
\setboolean{fullversion}{true}

\usetikzlibrary{backgrounds,arrows.meta}

\input{macros.tex}
\input{lola-syntax.tex}

\begin{document}
%
\title{Two Ways to See the Future: Combining Prediction and Future-Offset Accesses in RTLola}
\titlerunning{Combining Prediction and Future-Offset Accesses in RTLola}
%
\author{Jan Baumeister\inst{1}\orcidlink{0000-0002-8891-7483} \and
Bernd Finkbeiner\inst{1,2}\orcidlink{0000-0002-4280-8441} \and
Eduard Müller\inst{1}\orcidlink{0009-0007-1511-2469} \and\\
Frederik Scheerer\textsuperscript{(\Letter)}\inst{1}\orcidlink{0009-0007-8115-0359} \and
Julia Tillman\inst{3}
}
\authorrunning{Baumeister et. al.}
%
\institute{
CISPA Helmholtz Center for Information Security, Saarbrücken, Germany
\email{\{jan.baumeister,finkbeiner,eduard.mueller,frederik.scheerer\}@cispa.de} \and
Technical University of Munich, Germany \and
Saarland University, Saarbrücken, Germany
}
\maketitle              
\begin{abstract}
RTLola is a stream-based specification language designed for asynchronous real-time systems.
While many temporal specifications naturally refer to future behavior, RTLola currently offers no mechanism to express such future-dependent properties.
In this paper, we extend RTLola with two complementary mechanisms to reason about the future.
First, we introduce a \emph{prediction operator} that extrapolates future stream values at arbitrary timestamps based on past observations.
Second, we add a discrete \emph{future offset operator}, which provides access to precise future values by delaying the evaluation of the dependent stream expressions.
While the former enables immediate, but possibly imprecise predictions, the latter ensures exact values once the required information becomes available.
We formalize both extensions in the RTLola semantics and evaluate their implementation on runtime and memory consumption.

\keywords{Stream-Based Monitoring \and Cyber-Physical Systems}
\end{abstract}

\input{content/intro.tex}
\input{content/preliminaries.tex}

\input{content/predict.tex}
\input{content/offsets.tex}
\input{content/evaluation.tex}
\input{content/conclusion.tex}

\begin{credits}
\subsubsection{\ackname} This work was partially supported by the German Research Foundation (DFG) as part of TRR 248 (No.~389792660) and PreCePT (No.~521273327), and by the European Research Council (ERC) Grant HYPER (No.~101055412).

\subsubsection{\discintname}
The authors have no competing interests to declare that are relevant to the content of this article.
\end{credits}

\bibliographystyle{splncs04}
\bibliography{bibliography.bib}

\ifthenelse{\boolean{fullversion}}{
    \clearpage
    \appendix
    \input{content/appendix.tex}
}{}

\end{document}

%% file: macros.tex
\newcommand{\NN}{\mathbb{N}}
\newcommand{\RR}{\mathbb{R}}

\newcommand{\ZZ}{\mathbb{Z}}
\newcommand{\VV}{\mathbb{V}}

\newcommand{\powerset}[1]{\mathcal{P}\left(#1\right)}
\newcommand{\FRTLola}{\textsc{RTLola}$^f$\xspace}
\newcommand{\RTLola}{\textsc{RTLola}\xspace}
\newcommand{\Lola}{\textsc{Lola}\xspace}
\newcommand{\Iref}{\mathtt{ID}^\uparrow}
\newcommand{\Oref}{\mathtt{ID}^\downarrow}

\newcommand{\EG}[2][\varphi]{EG_{#1}(#2)}
\newcommand{\sid}{\mathit{sid}}
\newcommand{\world}{\omega}
\newcommand{\World}{\mathbb{W}}
\newcommand{\Time}{\text{Time}}

\newcommand{\semantics}[1]{\left\llbracket#1\right\rrbracket}

\newcommand{\offset}[1]{\mathit{offset}(\mathit{by}: #1)}
\newcommand{\rtoffset}[1]{\mathit{offset}_{RT}(\mathit{by}: #1)}

%% file: lola-syntax.tex
\definecolor{bluekeywords}{rgb}{0.13, 0.13, 1}
\definecolor{greentypes}{rgb}{0, 0.5, 0}
\definecolor{orangecomments}{rgb}{1, 0.5, 0.1}
\definecolor{redstrings}{RGB}{171, 114, 2}
\definecolor{graynumbers}{rgb}{0.5, 0.5, 0.5}
\definecolor{goldcomments}{rgb}{0.6, 0.4, 0.08}

\definecolor{implemented}{rgb}{0.67, 0.9, 0.93}
\colorlet{existing}{lightgray}

\lstdefinelanguage{Lola}{
  keywords=[0]{input, output, trigger, constant, import, spawn, eval, close, with, when, filter},
  moredelim=**[is][\transparent{0.6}]{?}{?},
  moredelim=**[is][\color{greentypes}@]{@}{@},
  keywordstyle=[0]\bfseries\color{bluekeywords},
  keywords=[1]{if, then, else, aggregate, defaults, offset, last, by, or, to, sin, cos, abs, hold, over, using, over_instances, prob, over_discrete, in, predict},
  keywords=[2]{Variable, String, Int, Int64, UInt, UInt64, Bool, Float32, Float64, Float},
  keywordstyle=[2]\color{greentypes},
  sensitive=false,
  comment=[l]{//},
  morecomment=[s]{/*}{*/},
  morestring=[b]',
  morestring=[b]",
  literate={\\@}{@}1,
  moredelim=[s][\color{purple}\bfseries]{\#[}{]},
}
\makeatletter

%% file: content/intro.tex
\section{Introduction}

Many monitoring tasks naturally require reasoning about future system behavior.
Such requirements can be expressed naturally in future-time temporal logic:
for instance, the requirement that each takeoff event must be eventually followed by the aircraft reaching a valid flight altitude can be written in LTL as
\[
\square(\mathit{takeoff} \Rightarrow \bigcirc(\neg\mathit{takeoff} \;\mathcal{U}\; \mathit{alt\_reached})).
\]
In monitoring, specifications that refer to the past are usually preferable to those that reason about the future, because past-time specifications can be evaluated in an online manner without lookahead. For the monitorable fragments of logics like LTL, it is possible to translate future-time specifications into equivalent past-time specifications at the cost of an exponential blow-up in the formula~\cite{DBLP:conf/lop/LichtensteinPZ85,DBLP:conf/time/ArtaleGGMM23}.
For stream-based specification languages like \RTLola~\cite{DBLP:conf/fm/BaumeisterFKS24}, the situation is more difficult.
Unlike temporal logics, which evaluate a single verdict for an entire trace, \RTLola produces a new value at every time point.
A future-to-past translation would shift the evaluation points of the stream expressions, and with it, the timestamp at which a violation is reported.
However, in a real-time monitoring context, the precise time of a detected event is important.
For example, we might want the monitor to report the timestamp of the takeoff that triggered the violation, not the timestamp at which the violation was confirmed.

RTLola currently offers no mechanism to
express future-time specifications.
In this paper, we introduce 
two complementary mechanisms for accessing future information.
A \emph{prediction operator} for getting immediate, but possibly imprecise predictions, and a \emph{future-access operator}, which gets exact, but delayed information.
The prediction operator enables forecasting of future stream values based on a bounded history of past observations.
At each step, a prediction function is applied to a sliding window of recent values to estimate the stream’s value at a future offset.
This prediction function can be instantiated with standard models such as linear regression or other statistical and machine learning techniques.
This design enables flexible integration of domain-specific forecasting methods while preserving a uniform interface at the specification level.
However, such predictions are inherently uncertain.
In highly dynamic scenarios, the predicted values may deviate significantly from the actual future values.

An alternative approach, which was already present in the original \Lola~\cite{DBLP:conf/time/DAngeloSSRFSMM05} language, is to access future values directly via a future-offset operator.
This operator refers to values that will occur at a later point in time, providing exact information rather than estimates.
Naturally, this comes at the cost of a delayed evaluation: results can only be produced once the referenced future values have been observed.
Consequently, specifications using future offsets introduce latency but yield precise results.
This future access operator complements the prediction operator.
The two approaches provide a trade-off between immediacy and accuracy, and can be combined to obtain early estimates and precise confirmations.
In contrast to \Lola, supporting future-offset semantics in \RTLola is more difficult due to its asynchronous stream model, where streams may evolve at different rates and lack a global synchronized clock.
As a consequence, future references are not sound in general.
We identify a well-defined subset of \RTLola specifications for which future-offset accesses are supported, and strengthen the language's well-formedness criteria to guarantee this property.
We evaluate our approach on runtime and memory usage, and show that both remain manageable and, in many cases, comparable to standard \RTLola monitoring.
Finally, we show that depending on the input data, our implementation can maintain bounded memory even for specifications that are unbounded in principle.

The remainder of this paper is structured as follows.
\Cref{sec:preliminaries} introduces the necessary background on \RTLola.
\Cref{sec:predict} presents the prediction operator, followed by the future offset operator in \Cref{sec:future_offset}.
\Cref{sec:evaluation} then reports on the experimental evaluation.

\subsection{Related Work}
\enlargethispage*{\baselineskip}

Past-time and future-time specifications represent two established approaches to runtime monitoring~\cite{DBLP:conf/tacas/HavelundR02,DBLP:journals/tosem/BauerLS11}.
While past-time operators enable incremental evaluation, future-time operators allow more natural and succinct specifications~\cite{DBLP:conf/time/ArtaleGGMM23}, at the cost of delayed~\cite{DBLP:conf/time/DAngeloSSRFSMM05} or inconclusive verdicts~\cite{DBLP:journals/tosem/BauerLS11}. 
Our approach extends stream-based runtime monitoring with two mechanisms for reasoning about future behavior: prediction and exact future accesses.

Prediction of future system behavior has been studied extensively in runtime monitoring and cyber-physical systems.
Existing approaches rely on probabilistic and machine learning techniques, such as Bayesian estimation~\cite{DBLP:conf/iros/ChouY020}, neural network-based predictors~\cite{DBLP:conf/rv/CairoliBP21}, and general machine learning methods~\cite{DBLP:journals/inffus/ChenMLWL23}.
These methods typically require training data or domain specific knowledge.
In contrast, lightweight statistical techniques like polynomial regression~\cite{Spinger:article/bmc/FilipowMTRDDLS23} enable forecasting without extensive training.
Our work follows this approach, integrating predictive reasoning directly into the specification language \RTLola.
Predictive semantics for runtime verification have also been studied in the context of anticipating future property satisfaction or violation based on partial observations and system models~\cite{DBLP:conf/rv/Leucker12,DBLP:conf/nfm/ZhangLD12,DBLP:conf/iccps/LindemannQDP23,DBLP:conf/cav/HiplerKLS24}.

Besides prediction, several monitoring frameworks support exact future references.
In monitoring temporal logics, future-time operators are commonly used to reason about future system behavior~\cite{DBLP:journals/tosem/BauerLS11,DBLP:journals/jacm/BasinKMZ15,DBLP:conf/rv/SchumannMR15}.
Stream-based languages provide richer verdicts.
In this setting, future accesses appear mostly in monitoring languages for synchronous systems~\cite{DBLP:conf/time/DAngeloSSRFSMM05,DBLP:conf/rv/GorostiagaS18,DBLP:journals/sttt/GorostiagaS21} or in offline monitoring settings~\cite{DBLP:conf/rv/BozzelliS14,DBLP:conf/atva/RaszykBKT19}, where the complete trace is available during evaluation.
One example for a synchronous stream-based language is \Lola~\cite{DBLP:conf/time/DAngeloSSRFSMM05}, the predecessor of \RTLola, which supports future-offset accesses in a synchronous setting with a global clock.
In contrast, asynchronous stream-based languages~\cite{DBLP:conf/fm/BaumeisterFKS24,DBLP:conf/sbmf/ConventHLS0T18} allow streams to evolve independently at different rates.
While the asynchronous monitoring framework TeSSLa~\cite{DBLP:conf/sbmf/ConventHLS0T18,DBLP:phd/dnb/Scheffel22} supports scheduling events at future timestamps via delay, it does not provide explicit future-offset accesses, as considered in this work.
Until now, the \RTLola~\cite{DBLP:conf/fm/BaumeisterFKS24} semantics are restricted to evaluation based only on present and past information.
Our work extends future-offset accesses from the synchronous setting of \Lola to the asynchronous setting of \RTLola. 

%% file: content/preliminaries.tex
\section{Background}\label{sec:preliminaries}

\RTLola~\cite{DBLP:conf/fm/BaumeisterFKS24} is a stream-based monitoring language.
Stream-based monitors operate over \emph{streams}, i.e., infinite sequences of values.
\emph{Input streams} continuously receive values of the monitored system.
\emph{Output streams} are defined through stream equations and derive new values by filtering and aggregating the data.
\emph{Trigger}, special boolean-valued output streams, indicate a specification violation.

\begin{figure}[t]
\begin{lstlisting}
input alt : Float64
output alt$\_$above_ground @alt@ := alt - GROUND_LEVEL
output alt$\_$smoothed @1s@ :=
    alt$\_$above$\_$ground.aggregate(over: 2s, using: avg).defaults(to: 0.0)
output alt$\_$reached @1s@ := alt$\_$smoothed > 10.0
output alt$\_$sustained @1s@ :=
    alt$\_$reached $\land$ alt$\_$reached.offset(by: -1).defaults(to: false)
trigger alt_sustained "altitude sustained"
\end{lstlisting}
\caption{
    An \RTLola specification monitoring an altitude sensor.
}
\label{fig:example_rtlola}
\end{figure}

Consider the specification in \Cref{fig:example_rtlola} as an example.
The input stream \lstinline!alt! provides the readings of a drone's altitude sensor.
First, this measurement is converted to altitude above ground level by subtracting the ground-level height.
To mitigate sensor noise, the output stream \lstinline!alt$\_$smoothed! computes a sliding-window average over the previous two seconds, evaluated periodically every second.
The stream \lstinline!alt$\_$reached! evaluates whether the smoothed altitude exceeds a threshold of 10.0, and \lstinline!alt$\_$sustained! checks whether this condition holds for two consecutive evaluations.
To achieve this, the stream accesses the previous value of \lstinline!alt$\_$reached! using \lstinline!offset!.
Since no previous value exists at the beginning of monitoring, a default value is provided for this case.
The trigger emits a notification once the condition is satisfied.

\subsection{Types}
Every stream in \RTLola has two types.
The \emph{value type} describes the kind of data contained in the stream, such as a boolean, integer, or floating-point number.
The \emph{pacing type} determines \emph{when} new stream values are computed.
\RTLola distinguishes between two kinds of pacing types.
\emph{Event-based streams} are evaluated whenever specific inputs receive new values, whereas \emph{periodic streams} are evaluated at a fixed frequency.
Both value and pacing type are often inferred automatically, but can also be annotated explicitly.
For example, the annotation \lstinline!@1s! on the stream \lstinline!alt$\_$smoothed! indicates that it is evaluated once per second, while the \lstinline!alt$\_$above$\_$ground! stream is annotated with the input stream \lstinline!@alt@!, s.t. it is evaluated for every new arriving altitude reading.

Further, pacing types can include dynamic filter conditions.
Assuming an additional boolean input \lstinline!alt$\_$reliable!, we can adapt the \lstinline!alt$\_$above$\_$ground! stream to only produce values when the current altitude reading is reliable:
\begin{lstlisting}
output alt_above_ground
    eval @alt$\textcolor{greentypes}{\land}$alt_reliable@ when alt_reliable with alt - GROUND_LEVEL
\end{lstlisting}
Only if the \lstinline!when! condition is satisfied, stream values are produced by evaluating the \lstinline!with! expression.
Notice that the \lstinline!:=! syntax used before is just syntactic sugar for streams with a \lstinline!when! condition of \lstinline!true!.

In \RTLola, we differentiate between synchronous and asynchronous stream accesses.
All accesses of stream values encountered so far are \emph{synchronous}, meaning that the pacing types guarantee that, whenever a stream is accessed, the accessed stream has a value available at exactly that point in time.
For instance, the \lstinline!alt_sustained! stream can access the \lstinline!alt_reached! stream synchronously because both streams are evaluated once per second.
In contrast, to make stream accesses across different-paced streams, we must use an \emph{asynchronous} access.
As an example, consider the \lstinline!alt_display! stream, which outputs the most recent value from the \lstinline!alt_above_ground! stream every second:
\begin{lstlisting}
    output alt_display @1s@ := alt_above_ground.hold().defaults(to: 0.0)
\end{lstlisting}
This stream makes use of an \lstinline!hold! access, which accesses the \emph{most recent value} of the target stream, regardless of when that value was produced.
Because this value could potentially not exist, a default value is required for this case.
Sliding windows are another example of asynchronous stream accesses, such as on the \lstinline!alt_smoothed! stream from \Cref{fig:example_rtlola}.
For a more detailed introduction to the \RTLola specification language, we refer the reader to the \RTLola Tutorial\footnote{\url{https://rtlola.cispa.de/playground/tutorial}}~\cite{DBLP:conf/fm/BaumeisterFKS24}.

\subsection{Semantics}
The semantics of \RTLola are defined with respect to an \emph{evaluation model}.
We consider discrete timesteps $\Time = \{n \in \NN \mid n \le T_{\mathit{max}}\}$ and a domain of optional stream values $\VV_{\bot} = \VV \cup \{\bot\}$.
\begin{definition}[Evaluation Model~\cite{DBLP:conf/fm/BaumeisterFKS24,DBLP:conf/tacas/BaumeisterFSSW25}] \label{def:evaluation_model}
    Let $\varphi$ be an \RTLola specification with a set of input stream references $\Iref$ and output stream references $\Oref$. An evaluation model $\world \in \World$ is defined as:
    \begin{align*}
        \mathit{Stream} &:= \Time \rightarrow \VV_{\bot}\\
        \mathit{StreamMap} &:=  \Iref \uplus \Oref \rightarrow \mathit{Stream}\\
        \mathit{TimeMap} &:= \Time \rightarrow \mathbb{R}^+\\
        \World &:= \mathit{StreamMap} \times \mathit{TimeMap}
    \end{align*}
\end{definition}

An evaluation model consists of a stream map and a time map.
The stream map assigns each stream reference a time-indexed sequence of values.
As \RTLola is asynchronous, values may be absent at certain timesteps, denoted by $\bot$.
The time map assigns discrete timestamps to real-time timestamps.

Given an evaluation model $\world = (\mathit{streams},\mathit{times}) \in \World$, we use the following shorthand notation:
$\world(t) := \mathit{times}(t)$ to return the real-time timestamp for a time $t \in \Time$, and $\world(s) := \mathit{streams}(s)$ returns the stream associated with stream reference $s \in \Iref \uplus \Oref$.
\begin{definition}[RTLola Semantics~\cite{DBLP:conf/rv/BaumeisterFS25}]
    Given an \RTLola specification $\varphi$, we define the semantics as the set:
    \begin{align*}
        \semantics{\varphi} = \{\world \in \World \; \vert \; \forall sid \in \Oref . \; &\forall t \in \Time . \; \varphi(sid) \Downarrow_{\world}^{t} \world(sid)(t) \\
        &\land \forall t \in \Time . \; \world(t) < \world(t+1)\}
    \end{align*}
\end{definition}
In the formula, $\varphi(sid) \Downarrow_{\world}^{t} v$ denotes that the defining stream equation $\varphi(sid)$ of stream $sid$, evaluated at time $t$ relative to the evaluation model $\world$, yields the value $v$.
This evaluation first checks whether the stream's pacing condition is satisfied at time $t$.
If so, the result is the value obtained from evaluating the stream's equation, and otherwise the result is $\bot$.
This definition ensures that each output stream calculates its value according to the stream's equations $\varphi(sid)$ at every timestep $t$.
Furthermore, the real-time timestamps must be strictly monotonically increasing.
We additionally require that specifications are well-defined:
\begin{definition}[Well-Defined Specification~\cite{DBLP:phd/dnb/Schwenger22,DBLP:conf/rv/BaumeisterFS25}]
An \RTLola specification $\varphi$ is \emph{well-defined}, if, for every inputs stream assignment $I : \Time \rightarrow  (\Iref \rightarrow \VV_\bot)$ and time map $T: \Time \rightarrow \RR^+$, there exists exactly one evaluation model $\world \in \semantics{\varphi}$ such that for all $t \in \Time$,
\[
\world(t) = T(t) \quad \text{and} \quad \forall sid \in \Iref . \; \world(sid)(t) = I(t)(sid).
\]
\end{definition}
In other words, a well-defined specification admits exactly one evaluation model for every possible input stream assignment.
Consequently, all output stream values are uniquely determined by the input values and their timing.

\subsection{Static Analysis}

\RTLola guarantees well-definedness by implementing a static analysis based on a dependency graph, which captures temporal dependencies between streams.
\begin{definition}[Dependency Graph~\cite{DBLP:phd/dnb/Schwenger22}]\label{def:dependency_graph}
    Let $\varphi$ be an \RTLola specification.
    The corresponding dependency graph of $\varphi$ is a directed multi-graph $G_\varphi=\langle V, E \rangle$ with $V = \Iref \uplus \Oref$.
    The set of edges is represented by $E \subseteq V \times L_D \times V$, where
    \begin{align*}
    L_D &= \{ \mathit{Filter}, \mathit{Eval} \} \times \left(\{ \mathit{Sync} \}
    \cup \{ \mathit{Hold} \}
    \cup (\{ \mathit{Offset} \} \times \ZZ)
    \cup (\{ \mathit{Aggr} \} \times \mathcal{F} \times \RR)\right)
    \end{align*}
    is the corresponding dependency label.
    The first component is the behavioral argument, and the second component is the stream access kind.
    An edge $(s, (b,k), s')$ is in $E$ iff the stream expression of $s$ contains a stream access to $s'$ of kind $k$.
\end{definition}

\Cref{fig:dep_graph_example} depicts the dependency graph for the specification in \Cref{fig:example_rtlola}.
Each stream access in the specification corresponds to an edge in the graph.
Since all accesses occur in \lstinline!with!-expressions, every edge is annotated with $\mathit{Eval}$.
Accesses in \lstinline!when!-conditions would carry a $\mathit{Filter}$ annotation instead.

The static analysis uses the dependency graph and is conducted along three dimensions.
The \emph{well-formedness} is a static criterion ensuring a well-defined specification.
The \emph{evaluation order} represents a partial order on streams and ensures that streams are efficiently resolved within one evaluation cycle.
Intuitively, this ensures that all dependencies of a stream are evaluated before the stream itself within each evaluation cycle.
Finally, the \emph{memory bound} of a stream gives us the number of values that must be retained for monitoring.
This computation ensures that the monitor operates within finite memory constraints.
The existing definitions~\cite{DBLP:phd/dnb/Schwenger22} are insufficient for the support of our proposed operators and are extended throughout the rest of this paper.

%% file: content/predict.tex
\section{Prediction Operator}\label{sec:predict}

\begin{figure}[t]
    \begin{minipage}{0.37\linewidth}
    \vspace*{4mm}
    \centering
    \begin{tikzpicture}[
        n/.style={font=\scriptsize,draw},
        node distance=5mm,
        l/.style={font=\scriptsize,right},
        shorten >=0.5mm
    ]
        \node[n] (alt_sustained) {\lstinline!alt_sustained!};
        \node[n,above=of alt_sustained] (alt_reached) {\lstinline!alt_reached!};
        \node[n,above=of alt_reached] (alt_smoothed) {\lstinline!alt_smoothed!};
        \node[n,above=of alt_smoothed] (alt_above_ground) {\lstinline!alt_above_ground!};
        \node[n,above=of alt_above_ground] (alt) {\lstinline!alt!};
    
        \draw[->] (alt_sustained) to[bend left] node[l,left] {$(\mathit{Eval}, \mathit{Sync})$} (alt_reached);
        \draw[->] (alt_sustained) to[bend right] node[l] {$(\mathit{Eval}, (\mathit{Offset}, -1))$} (alt_reached);
        \draw[->] (alt_reached) -- node[l] {$(\mathit{Eval}, \mathit{Sync})$} (alt_smoothed);
        \draw[->] (alt_smoothed) -- node[l] {$(\mathit{Eval}, (\mathit{Aggr}, \mathit{avg}, 2s))$} (alt_above_ground);
        \draw[->] (alt_above_ground) -- node[l] {$(\mathit{Eval}, \mathit{Sync})$} (alt);
    \end{tikzpicture}
    \caption{
        The dependency graph for \Cref{fig:example_rtlola}.
    }
    \label{fig:dep_graph_example}
    \end{minipage}
    \hfill
    \begin{minipage}{0.55\linewidth}
    \centering
    \includegraphics[width=\linewidth]{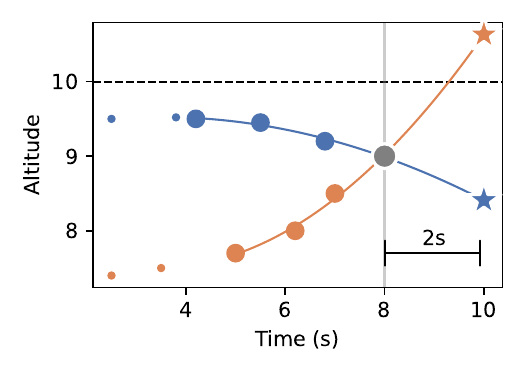}
    \vspace*{-7mm}
    \caption{An example of the prediction given two different altitude traces.}
    \label{fig:predict_example_plot}
    \end{minipage}
\end{figure}

The predict operator enables the monitor to predict future stream values based on past observations.
As an example, consider again the situation from \Cref{sec:preliminaries} monitoring the altitude of a drone.
By incorporating prediction, the monitor can issue early warnings when a violation is likely to occur in the near future, enabling countermeasures to be taken before the violation occurs.
\Cref{fig:predict_example_plot} illustrates this behavior for two altitude traces (orange and blue).
At each evaluation point, the prediction operator considers the last four observed values of the respective trace (highlighted in the figure) and fits a regression model to these points.
This model is used to estimate the altitude value two seconds into the future.
As shown, one trace yields a predicted violation while the other remains within safe bounds.
This can be expressed directly in \RTLola as follows:
\begin{lstlisting}
    output prediction @1s@ := alt_smoothed.predict(using: polyreg2,
                                over_discrete: 4, in: 2s).defaults(to: 0.0)
    trigger prediction > 10.0 "height violation expected soon"
\end{lstlisting}
Here, a second-degree polynomial is fitted to the four most recent altitude values, and the resulting model is evaluated at the specified future time point to obtain the predicted value.
If not all four values exist yet, a default value is used instead.

Formally, we introduce prediction as a pure operator over finite sequences of timestamped values.
The stream-based syntax presented above is then defined as syntactic sugar that extracts the required observations and applies the core operator.
This separation decouples the prediction mechanism from \RTLola's stream semantics and allows the existing static analyses to remain unchanged.

\subsection{Core Prediction Operator}

We first define a prediction function independently of \RTLola's syntax.
This definition is generic in the choice of a prediction model $\gamma$, requiring only that $\gamma$ provides a function construction via $\gamma.\mathit{fit}$.
\begin{definition}[Prediction Function]
    Let $\gamma$ be a prediction model and $n \in \NN_{>0}$.
    The prediction function $\gamma_n$ constructs a function based on a sequence of $n$ pairs of timestamps and values.
    \begin{align*}
        &\gamma_n : (\RR^+_\bot \times \VV_{\bot})^n \rightarrow (\RR^+ \rightarrow \VV_{\bot})\\
        &\gamma_n((t_1,v_1),\dots,(t_n,v_n)) = \begin{cases}
            f_{\bot}&\text{if } \exists 1\le i \le n. v_i = \bot\\
            \gamma.\mathit{fit}((t_1,v_1),\dots,(t_n,v_n)) &\text{otherwise}
        \end{cases}
    \end{align*}
    Here, $f_{\bot}: \RR^+ \rightarrow \bot$ denotes the constant function returning $\bot$ for all timestamps.
    Otherwise, $\gamma.\mathit{fit}((t_1,y_1),\dots,(t_n,v_n))$ yields a function mapping real-time timestamps to predicted values.
\end{definition}
Different instantiations of the prediction model $\gamma$ correspond to different regression or extrapolation techniques.
For instance, a linear regression model fits a line to the observed data points, while higher-order polynomial models capture nonlinear trends.
In practice, the choice of model can be adapted to the monitoring application and the computational overhead that can be tolerated at runtime.

\subsection{Syntactic Sugar}

We integrate prediction into \RTLola using a dedicated prediction expression.
\begin{definition}[Prediction Operator]\label{def:predict_syntax}
Let $s$ be a stream, $\gamma$ a prediction model, $n \in \mathbb{N}_{>0}$, and $\delta \in \RR^+$ a positive real-time offset.
The stream expression
\begin{lstlisting}[mathescape=true]
s.predict(using: $\gamma$, over_discrete: $n$, in: $\delta$)
\end{lstlisting}
evaluates at time $t$ to $\gamma_n\big(v_{s,n,t}\big)\big(\world(t) + \delta\big)$, where $v_{s,n,t}$ denotes the sequence of the $n$ most recent timestamped value pairs of $s$ available at $t$.
\end{definition}
Intuitively, the operator evaluates at each time $t$ by extracting the $n$ most recent timestamped values of $s$ available at $t$, constructing a prediction function via $\gamma_n$, and evaluating this function at the future timestamp $\world(t) + \delta$.
If any of the selected points is $\bot$, the result is $\bot$ as well.
For this case, the prediction must be followed by a default expression.

As syntactic sugar, we can make use of past \lstinline!offset! expressions to extract the required points.
For a stream $s$, we first collect the most recent $n$ values together with their corresponding timestamps using discrete offset accesses:
\begin{lstlisting}
    output $s$_points$_n$ :=
        (time($s$).offset(by: -($n-1$)), $s$.offset(by: -($n-1$))),
        ...
        (time($s$).offset(by: -1), $s$.offset(by: -1)),
        (time($s$), $s$)
\end{lstlisting}
Here, \lstinline!time(s)! denotes a stream that shares the same pacing as $s$ but carries the corresponding timestamps of its evaluation points.
Due to the synchronous nature of the offset access, the pacing of this derived stream inherits the pacing of $s$.
Consequently, at every evaluation point of $s$, the tuple of the most recent $n$ points is computed.
Note that \lstinline!offset! returns $\bot$ if the accessed value does not exist, which is then handled by the prediction function.
A prediction expression as given in \Cref{def:predict_syntax} is translated into the core operator application
\begin{lstlisting}
    $\gamma_n$($s$_points$_n$.hold())(time + $\delta$)
\end{lstlisting}
where \lstinline!time! gives access to the current timestamp of the stream evaluation.
The \lstinline!hold! access is required to decouple the pacing of $s$ from the evaluation of the prediction expression.
This allows the prediction operator to be used in streams with a different pacing than the predicted stream.

%% file: content/offsets.tex
\section{Future Offsets}\label{sec:future_offset}

\begin{figure}[t]
\begin{lstlisting}
output takeoff @1s@ := alt_smoothed $\le$ 0.05
    $\land$ alt_smoothed.offset(by: 1).defaults(to: 0.0) > 0.05
output valid_flight := alt_reached $\lor$ ($\neg$takeoff $\land$ valid_flight.offset(by: 1).defaults(to: false))
trigger $\neg$(takeoff $\Rightarrow$ valid_flight.offset(by: 1).defaults(to: false))
\end{lstlisting}
\caption{
    An \RTLola specification monitoring consecutive flights of a drone for altitude thresholds.
}
\label{fig:ltl_altitude_spec}
\end{figure}


The future offset operator enables references to stream values that will be available at a fixed position in the future relative to the current evaluation point.
To illustrate the operator, consider the LTL formula from the introduction:
\[
\square(\mathit{takeoff} \Rightarrow \bigcirc(\neg\mathit{takeoff} \;\mathcal{U}\; \mathit{alt\_reached})).
\]
Intuitively, the monitor ensures that after each takeoff, the drone eventually reaches a valid flight altitude before the next takeoff occurs.
A corresponding specification can be expressed in \RTLola with future offsets as depicted in \Cref{fig:ltl_altitude_spec}, closely following the structure of the LTL formula.
The stream \lstinline!takeoff! detects takeoff events based on an altitude threshold and evaluates to true when the altitude is initially below the bound and exceeds it in the next time step.
The stream \lstinline!valid_flight! captures whether the current flight (until the next takeoff event) satisfies the altitude requirement.
The trigger then ensures that every takeoff event corresponds to a valid flight.

\begin{figure}[t]
    \centering
    \includegraphics[width=0.55\linewidth]{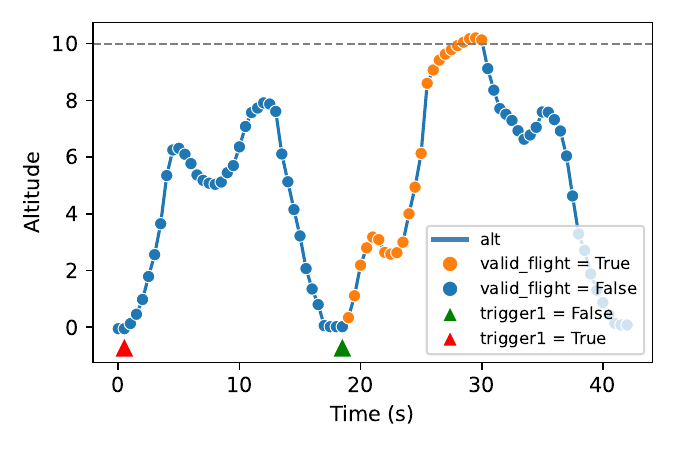}
    \caption{The altitude monitoring of two consequtive flight of a drone.}
    \label{fig:future_offset_example}
\end{figure}

An example is illustrated in \Cref{fig:future_offset_example}.
The plot shows an altitude trace of a drone performing two consecutive flights.
The color of the plotted points encodes the value of \lstinline!valid_flight!.
The triangles below the plot mark the takeoff events detected by the monitor.
Their color corresponds to the evaluation of the trigger condition:
On the first flight, the target altitude is not reached, so the trigger issues a warning.
On the second flight, the altitude is reached successfully.
Note that the violation is reported with the takeoff timestamp, allowing it to be associated with the correct flight that caused it.

In the remainder of this section, we first introduce real-time offsets in \RTLola, which naturally arise in future-oriented specifications, and show how they can be reduced to discrete offsets.
We then formalize the semantics and adapt the static analyses to the extended offset operator.

\subsection{Real-Time Offsets}

In contrast to the prediction operator, offset accesses refer only to concrete stream values.
The offset operator therefore operates over discrete positions in the stream, accessing the $n$-th value in the past or in the future.
However, many specifications are more naturally expressed using real-time offsets, such as "the value two seconds ago" or "the value in five seconds".

In general, such references are not directly supported, since the monitor cannot guarantee that a value exists at an exact real-time position during execution.
However, if the pacing information of streams provides sufficient knowledge about their temporal structure, a real-time offset can be translated into an equivalent discrete offset.
We therefore introduce real-time offsets not as an operator with a new semantics, but rather as syntactic sugar for discrete offsets.

\begin{definition}[Rewriting Real-Time Offsets]
  Let $s$ be a stream with periodic pacing $\delta_s$, and let $t\in\RR$ be a real-time duration.
  If there exists $k \in \NN$ such that $ t = k \cdot \delta_s$, then the following rewrite rule applies:
  \[
	s.\rtoffset{t} = s.\offset{k}.
  \]
  If no such $k$ exists, the monitor cannot guarantee that the accessed value exists.
  Such expressions are rejected during static analysis.
\end{definition}

\subsection{Future-Offset Semantics}

We refer to \RTLola extended with the future offset operator as \FRTLola, and extend \RTLola's semantics from~\cite{baumeister2025intermediate} with two inference rules for future offsets:
\[
\inferrule*[left=\text{FO-Hit}]
  {o > 0 \\ \mathit{sfx} = \mathit{Sfx}(\world, s, t) \\\\ |\mathit{sfx}| > o \\ \mathit{sfx}[o] = v}
  {s.\offset{o} \Downarrow_\world^t v}
\quad
\inferrule*[left=\text{FO-Miss}]
  {\mathit{o} > 0 \\\\ \mathit{sfx} = \mathit{Sfx}(\world, s, t) \\ |\mathit{sfx}| \le o}
  {s.\offset{o} \Downarrow_\world^t \bot}
\]
These rules are analogous to their past-offset counterparts (where $o < 0$), but operate on the suffix of the trace rather than its prefix.
Specifically, $\mathit{Sfx}(\world, s, t)$ denotes the sequence of all values of stream $s$ at time $t$ and beyond, retaining only positions where $\world(s)(t) \neq \bot$.
\textsc{FO-Hit} applies when the suffix is long enough to contain index $o$, yielding the value $v$ found there.
\textsc{FO-Miss} applies when the suffix is too short and returns $\bot$, which must be resolved by a default expression.
 
\subsection{Well-Formedness}

%


The addition of future offsets to \FRTLola introduces a new source of cyclic dependencies that must be considered for well-formedness.

\begin{definition}[Well-Formedness]\label{def:well-formedness}
A \FRTLola specification $\varphi$ is called well-formed, if the dependency graph $G_\varphi$ does not contain any cycle with:
\begin{enumerate}
    \item accumulated edge weight of zero.
    \item a \lstinline!Filter!-edge and an \lstinline!Offset!-edge.
    \item an positive \lstinline!Offset!-edge and either an \lstinline!Aggregate!-edge or a \lstinline!Hold!-edge.
\end{enumerate}
\end{definition}

The first criterion originates from \Lola~\cite{DBLP:conf/time/DAngeloSSRFSMM05} and excludes synchronous cyclic dependencies.
The second criterion was introduced for \RTLola~\cite{DBLP:phd/dnb/Schwenger22} to account for the interaction between filtering and offset accesses.
The third criterion is specific to \FRTLola and addresses asynchronous cycles involving future offsets.
It is best illustrated with an example.
We use \lstinline!s.offset(by: o, or: d)! as syntactic sugar for the expression \lstinline!s.offset(by: o).defaults(to: d)!:

\begin{center}
\begin{minipage}{0.62\linewidth}
\begin{lstlisting}
input x : Bool
input y : Bool
output a @x$\textcolor{greentypes}{\land}$y@ := b.offset(by: 2, or: false)
output b @y@ := a.hold(or: false)
\end{lstlisting}
\end{minipage}
\hfill
\begin{minipage}{0.37\linewidth}
\hfill
\begin{tikzpicture}[
    n/.style={draw,circle,fill=black,inner sep=0.6mm}
]
    \def\f{0.25}
    \def\l{3.3}
    \def\t{0.7}
    \draw[-{Stealth}] (0,0) node[left] {\lstinline!x!} -- ++(\l,0);
    \draw[-{Stealth}] (0,-1*\f) node[left] {\lstinline!y!} -- ++(\l,0);
    \draw[-{Stealth}] (0,-2*\f) node[left] {\lstinline!a!} -- ++(\l,0);
    \draw[-{Stealth}] (0,-5*\f) node[left] {\lstinline!b!} -- ++(\l,0);

    \foreach \i in {0,3}{
        \node[n] (x\i) at (\t*\i+0.5,0) {};
        \node[n] (a\i) at (\t*\i+0.5,-2*\f) {};
    }
    \foreach \i in {0,1,2,3}{
        \node[n] (y\i) at (\t*\i+0.5,-1*\f) {};
        \node[n] (b\i) at (\t*\i+0.5,-5*\f) {};
    }

    \draw[->,shorten >=1mm] (a0) to[bend left=15] (b2);
    \draw[->,dashed,shorten >=1mm] (b2) to[bend left=15] (a0);
\end{tikzpicture}
\end{minipage}
\end{center}

On the right, we depict one possible \emph{evaluation graph} of the specification.
The evaluation graph represents the concrete dependencies between stream instances at particular evaluation points.
Since the timing of input streams \lstinline!x! and \lstinline!y! is not in control of the monitor, it cannot statically predict when \lstinline!a! and \lstinline!b! will be evaluated.
In the example on the right, you can see one possible evaluation.
However, in this case, the dashed \lstinline!hold! access from \lstinline!b! to \lstinline!a! forms an evaluation cycle.
The specification is therefore not well-defined, even though the corresponding cycle in the dependency graph contains a positive weight.
The third well-formedness criterion excludes precisely such situations.

The well-formedness conditions exclude dependency patterns that may lead to unresolved cyclic dependencies during evaluation.
This allows us to establish the central correctness result of the analysis.
\begin{theorem}\label{thm:WF_WD}
Every well-formed specification is well-defined.
\end{theorem}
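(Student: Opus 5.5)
Fix an input assignment $I$ and a time map $T$. The plan is to construct the unique evaluation model by well-founded induction over the \emph{evaluation graph}. Its nodes are stream instances $(s,t)$ with $s \in \Iref \uplus \Oref$ and $t \in \Time$. It has an edge from $(s,t)$ to $(s',t')$ whenever evaluating $\varphi(s)$ at $t$ may consult $\world(s')(t')$. We first determine all pacing and filter values, since these decide which positions are present. After that, the value of each instance is fixed by the evaluation rules, including \textsc{FO-Hit} and \textsc{FO-Miss}.

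The argument then has two parts. \emph{Existence:} if the evaluation graph, restricted to the instances actually consulted, is acyclic, then a definition by induction along it yields an $\world \in \semantics{\varphi}$. This works because $\Time$ is finite ($t \le T_{\mathit{max}}$), so every descending chain is finite. \emph{Uniqueness:} if $\world, \world' \in \semantics{\varphi}$ both agree with $I$ and $T$, an induction along the same order shows they agree on every instance. The reason is that each rule is deterministic in the values it consults. Well-definedness therefore reduces to showing that well-formedness (\Cref{def:well-formedness}) implies acyclicity of the evaluation graph.

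Acyclicity goes by contradiction. Suppose there is a cycle $(s_0,t_0)\to\dots\to(s_k,t_k)=(s_0,t_0)$ in the evaluation graph. It projects onto a cycle in $G_\varphi$. We assign each evaluation edge a direction in time:
\begin{itemize}
\item a $\mathit{Sync}$ edge stays at the same $t$;
\item an offset edge with weight $o$ moves $|o|$ present positions of the target stream, forward if $o>0$ and backward if $o<0$;
\item $\mathit{Hold}$ and $\mathit{Aggr}$ edges move to some $t' \le t$.
\end{itemize}
Now split into cases.
\begin{enumerate}
\item The projected cycle contains no $\mathit{Hold}$/$\mathit{Aggr}$ edge and no $\mathit{Filter}$ edge. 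Then, as in \Lola, the net displacement in "present positions" is determined by the accumulated offset weight. This needs a lemma that synchronous edges preserve the position index along streams of identical pacing. Returning to $t_0$ forces weight zero, which condition~1 forbids.
\item The projected cycle contains a $\mathit{Filter}$ edge. If it also has an offset edge, condition~2 forbids it. Otherwise the cycle is purely synchronous or uses only hold/aggregate edges, and the existing \RTLola argument~\cite{DBLP:phd/dnb/Schwenger22} applies.
\item The projected cycle contains a $\mathit{Hold}$ or $\mathit{Aggr}$ edge. Condition~3 then says it has no positive offset edge, so every edge weakly decreases time. The cycle must then stay at a single $t$ and use zero-weight edges only, contradicting condition~1 (or the standard past-only argument).
\end{enumerate}

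The main obstacle is the first case. We have to relate the discrete offset weight to actual positions across streams with different event-based pacings, and offset positions are counted per stream. I would try to prove that along a sync or offset-only path, consecutive streams have pacing types that imply one another. This is guaranteed by the type checker for synchronous accesses. Given that, the present positions of the streams on the cycle coincide at the relevant timestamps, so the displacement adds up exactly to the weight.

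A secondary subtlety is that \textsc{FO-Miss} depends on the \emph{length} of the suffix. Deciding it therefore requires knowing the pacing and filter of $s$ at all later times. So the filter values must be resolved before the offset values. This is exactly what condition~2 enables, since it prevents filters from depending on offsets cyclically. I would make this stratification explicit as a first lemma.
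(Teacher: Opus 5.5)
Your proposal has the same skeleton as the paper's proof. The paper also first shows that an acyclic evaluation graph for every input trace yields exactly one model, by recursion and induction along a topological order. It then refutes a hypothetical cycle $C$ via the signs of the displacements $\Delta_{C,i} = t_{i+1} - t_i$.

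The difference is in the case analysis. The paper splits on whether some hold/aggregate edge has $\Delta_{C,i} < 0$. If so, some offset must be positive, violating condition~3. If not, the offsets sum to zero, violating condition~1. Filter cycles are handled separately as non-existence of the evaluation graph. That route is shorter, but it sets $\Delta_{C,i} = o$ for offset edges. This equates offset weight with displacement in time steps, which is valid only if the accessed stream is present at every intermediate step.

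Your split by edge kinds confines that identification to the sync/offset-only case~1, and you justify it there. The pacing implications the type checker enforces for synchronous and offset accesses become equivalences around the cycle, so all its streams share one presence set. In case~3 you invoke condition~3 first, after which only the signs of displacements matter. You also account for \textsc{FO-Miss} depending on the presence of later instances, which the paper's evaluation graph does not model.

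Three points to fix.

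\textbf{The stratification cannot be global.} ``First all pacing and filter values, then the values'' does not work, because \texttt{when}-conditions may read output values, even future-offset ones, as long as no cycle mixes them. Condition~2 only yields an interleaved order, in which an instance's presence is settled before any offset access that steps over it.

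\textbf{Cycles must be read as closed walks.} As in the paper, an evaluation-graph cycle projects to a closed walk of $G_\varphi$, not necessarily a simple cycle. So conditions~1--3 must be read for closed walks, as in \Lola. Otherwise your case-3 step ``condition~3 says it has no positive offset edge'' fails. For example, take $a$ (\texttt{@x}) reading \texttt{b.hold()} and \texttt{c.offset(by: 1)}, $c := a$, and $b$ (\texttt{@x}$\land$\texttt{y}) $:=$ \texttt{a.offset(by: -1)}. Its simple cycles $a\,b\,a$ and $a\,c\,a$ are admissible. Yet $(a,0)\to(c,1)\to(a,1)\to(b,1)\to(a,0)$ is an evaluation cycle when \texttt{x} arrives at times $0$ and $1$ and \texttt{y} only at time $1$.

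\textbf{Case~2 without offsets.} Here you can cite condition~1 directly, since all remaining edges have weight zero. This is also the precise form of the paper's appeal to criterion~(2) for filter cycles.
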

\ifthenelse{\boolean{fullversion}}{
The proof is given in \Cref{app:wd_proof}.
}{
The proof is given in the full version~\cite{fullversion} of this paper.
}

\paragraph{Periodic Dependency Cycles.}
For cycles that consist solely of periodic streams, the analysis can be less restrictive.
Consider the following example:
\begin{center}
\begin{minipage}{0.6\linewidth}
\begin{lstlisting}
output a @3s@ := b.offset(by: 2, or: false)
output b @3s@ := c.offset(by: -2, or: false)
output c @1s@ := a.hold(or: false)
\end{lstlisting}
\end{minipage}
\hfill
\begin{minipage}{0.39\linewidth}
\hfill
\begin{tikzpicture}[
    n/.style={draw,circle,fill=black,inner sep=0.6mm}
]
    \def\f{0.5}
    \def\l{3.3}
    \def\t{0.4}
    \draw[-{Stealth}] (0,0) node[left] {\lstinline!a!} -- ++(\l,0);
    \draw[-{Stealth}] (0,-1*\f) node[left] {\lstinline!b!} -- ++(\l,0);
    \draw[-{Stealth}] (0,-2*\f) node[left] {\lstinline!c!} -- ++(\l,0);

    \foreach \i in {0,3,6}{
        \node[n] (a\i) at (\t*\i+0.3,0) {};
        \node[n] (b\i) at (\t*\i+0.3,-1*\f) {};
    }
    \foreach \i in {0,...,6}{
        \node[n] (c\i) at (\t*\i+0.3,-2*\f) {};
    }

    \draw[->,shorten >=1.5mm] (a0) -- ([yshift=0.5mm]b6.center);
    \draw[->,shorten >=1mm] (b6) -- (c4);
    \draw[->,shorten >=1mm,dashed] (c4) to[bend right] (a3);
\end{tikzpicture}
\end{minipage}
\end{center}

Although the specification includes a cycle with an accumulated edge weight of zero, and therefore does not satisfy well-formedness according to \Cref{def:well-formedness}, it does not induce a cycle in the evaluation graph and is well-defined.
At every evaluation point, all stream values can be computed without cyclic dependencies.
For such specifications, we provide a transformation that uses the periodic pacing of the streams in the cycle.
It assigns all streams a common pacing corresponding to the greatest common divisor of their original periods and adjusts the offset accesses accordingly.
After this step, the resulting dependency cycle becomes synchronous and no longer forms a zero-cycle.
\ifthenelse{\boolean{fullversion}}{
You can find more details about this transformation in \Cref{app:period_cycle}.
}{
You can find more details about this transformation in the full version~\cite{fullversion} of this paper.
}

\subsection{Monitoring Algorithm}

\begin{algorithm}[t]
\caption{Asynchronous Monitoring Algorithm}
\label{algo:monitor}
\DontPrintSemicolon

\KwIn{specification $\varphi$,
    input stream assignment $I : \Time \to (\Iref \to \VV_\bot)$,
    time map $T: \Time \rightarrow \RR^+$
}

\BlankLine
$R \leftarrow \emptyset \subseteq \powerset{(\Iref \uplus \Oref) \times \VV_\bot \times \Time}$ \tcc*{Resolved values} 
$U \leftarrow \emptyset \subseteq \powerset{\Oref \times \Time}$ \tcc *{Unresolved outputs} 
$M \leftarrow \emptyset \subseteq \powerset{\Oref \times \Time}$ \tcc*{Maybe outputs}

\BlankLine
\For{$t_M \in 1, 2, \ldots, T_{\mathit{max}}$}{
    $R \leftarrow R \;\cup\; \bigl\{\,(s, I(t_M)(s), t) \mid s \in \Iref,\; I(t_M)(s) \neq \bot\,\bigr\}$\;
    $M \leftarrow M \;\cup\; \bigl\{\,(s, t_M) \mid s \in \Oref\,\bigr\}$\;

    \BlankLine
    \While{\upshape some set in $\{R, U, M\}$ has changed}{

        \ForEach{$(s,t) \in M$}{
            $M \leftarrow M \setminus \{(s,t)\}$\;
            \Switch{$\mathit{eval}(\varphi^p_s, R, U, M, t_M, t, T)$}{
                \lCase{$?$}{
                    $M \leftarrow M \cup \{(s, t)\}$
                }
                \lCase{$\mathit{true}$}{
                    $U \leftarrow U \cup \{(s, t)\}$
                }
                \lCase{$\mathit{false}$}{
                    skip
                }
            }
        }

        \BlankLine
        
        \ForEach{$(s,t) \in U$}{
            $U \leftarrow U \setminus \{(s,t)\}$\;
            \Switch{$\mathit{eval}(\varphi_s, R, U, M, t_M, t, T)$}{
                \lCase{$?$}{
                    $U \leftarrow U \cup \{(s,t)\}$
                }
                \lCase{$v$}{
                    $R \leftarrow R \cup \{(s, t)\}$
                }
            }
        }
    }
}
\end{algorithm}

\Cref{algo:monitor} presents the monitoring algorithm for \FRTLola.
The algorithm maintains three sets.
$R$ holds the resolved stream values that have been computed.
$U$ tracks timepoints at which stream values will definitely be computed, but for which the stream expression cannot be evaluated yet because of missing dependencies.
Finally, $M$ contains timepoints that \emph{might} be evaluated, depending on whether the pacing condition is satisfied at that timepoint.

The algorithm is similar to that of \Lola~\cite{DBLP:conf/time/DAngeloSSRFSMM05}, with the addition of the set $M$.
The fixpoint loop alternates between pacing and value evaluation using the function $\mathit{eval}$.
In each iteration, every pair currently contained in $M$ (respectively $U$) is evaluated exactly once in arbitrary order, based on a snapshot of the set taken at the start of that iteration.
For pacing, $\mathit{eval}$ is applied to $\varphi^p_s$, the combined pacing and when-expression of stream $s$, and returns a boolean.
For value computation, it is applied to $\varphi_s$, the with-expression of $s$.
In both cases, $\mathit{eval}$ may return $?$ (unresolved) when not all required dependencies are available yet.

The handling of offset accesses in $\mathit{eval}$ deserves particular attention.
To evaluate an offset expression, the correct accessed value needs to be found in $R$.
The algorithm resolves this by searching through $R$ and $U$ forwards or backward in time, until the desired offset is found.
If found in $R$, the corresponding value is returned.
However, if an entry in $M$ is encountered along the way, the evaluation returns unresolved because it can not be decided which value will be accessed.

A further difference from \Lola is that this algorithm does not partially evaluate expressions.
Instead, all values accessed by an expression are retained in memory until the expression can be evaluated in full.
This has the advantage that if a stream value is required by multiple expressions, it needs to be stored only once, rather than duplicated across partial evaluation states.

To relate \Cref{algo:monitor} to the semantics of \FRTLola, we establish a correspondence between the values in $R$ and in the evaluation model.

\begin{theorem}[Correctness of Monitoring Algorithm]\label{thm:algorithm}
Given a well-defined specification $\varphi$, an input stream assignment $I$, and a time map $T$.
Let $\world \in \semantics{\varphi}$ be the unique evaluation model for the given input streams and time map, and let $R_i$ denote the set $R$ after the completion of iteration $i$ of the algorithm.
Then, for every $i \in \Time$, the following holds:
\begin{enumerate}
    \item If $(s, v, t) \in R_i$, then $\world(s)(t) = v$.
    \item If $\world(s)(t) = v$, then there exists $i' \ge i$ such that $(s,v,t) \in R_{i'}$.
\end{enumerate}
\end{theorem}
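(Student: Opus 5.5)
We prove the two parts separately: part 1 (soundness) by induction over the algorithm's execution, and part 2 (completeness) by an inductive argument along the evaluation dependencies of the unique model $\world$.

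\textbf{Soundness.} We prove an invariant that holds after every single update inside the fixpoint loop, not only at the end of each iteration $i$. The invariant has three parts.
\begin{enumerate}
\item Every $(s,v,t)\in R$ satisfies $\world(s)(t)=v$.
\item Every $(s,t)\in U$ satisfies $\world(s)(t)\neq\bot$, i.e.\ the pacing of $s$ holds at $t$.
\item Every $(s,t)$ that has left $M$ without entering $U$ satisfies $\world(s)(t)=\bot$.
\end{enumerate}
Inputs are added to $R$ directly from $I$, and $\world$ agrees with $I$ on inputs, so they respect the invariant. For outputs, we need a lemma about $\mathit{eval}$: if $\mathit{eval}(e,R,U,M,t_M,t,T)$ returns a definite value $v\neq ?$, then $e\Downarrow^t_\world v$. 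This lemma is proved by structural induction on $e$, following the inference rules of the semantics. Most cases are routine. The essential case is offset access, both past and future (\textsc{FO-Hit}/\textsc{FO-Miss}). The search walks through $R\cup U$ and returns $?$ as soon as it meets an $M$-entry. So when it does return, every intermediate timepoint of $s$ is already known to be either present (in $R$ or $U$) or absent (removed from $M$). By the invariant, the counted positions therefore coincide exactly with the non-$\bot$ positions of $\mathit{Sfx}(\world,s,t)$, resp.\ the prefix. For \textsc{FO-Miss}, we also use that the suffix is only considered decided once $t_M=T_{\mathit{max}}$, i.e.\ once no further entries can appear. Hold and aggregation accesses are handled the same way with backward search. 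Given this lemma, each transition out of $M$ or $U$ preserves the invariant, since the semantics of $\varphi(\mathit{sid})$ first evaluates the pacing and then the with-expression.

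\textbf{Completeness.} Here we use well-definedness together with Theorem~\ref{thm:WF_WD} and the structure it provides: the evaluation graph of $\world$ is acyclic, so the instances $(s,t)$ admit a well-founded dependency order. Fix $(s,t)$ and let $i\ge t$ be arbitrary. Since the fixpoint loop runs until nothing changes, at termination every remaining item in $M$ or $U$ yields $?$. Suppose $(s,t)$ never reaches $R$. Then, because the algorithm stalls on it, some dependency of $(s,t)$ is itself unresolved. This argument must include the $M$-entries that block the offset search, since each of them is a pacing-dependency of $(s,t)$ in the evaluation graph. Following these unresolved dependencies gives an infinite descending chain, which is impossible in a finite, well-founded graph. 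The only exception is a dependency on a future timepoint $t'>t_M$; that one is resolved at iteration $t'$ or at the final iteration $T_{\mathit{max}}$. Hence some $i'\ge i$ with $(s,\world(s)(t),t)\in R_{i'}$ exists.

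\textbf{Main obstacle.} The main obstacle is making this well-foundedness precise. We must show that every way the algorithm can block corresponds to an edge of the evaluation graph of $\world$. In particular, the offset search's reliance on the absence of $M$-entries has to be reflected as Filter/pacing dependencies, and hold and aggregate accesses have to be reflected as well. To do this, I would define the evaluation graph formally as the relation ``$(s,t)$ needs $(s',t')$ to be decided'', and reuse the acyclicity argument from the proof of Theorem~\ref{thm:WF_WD} in \Cref{app:wd_proof}, which is exactly where criteria~2 and~3 of \Cref{def:well-formedness} are used.
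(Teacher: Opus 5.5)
Your skeleton matches the paper's: soundness by induction over insertions into $R$ using a correctness property of $\mathit{eval}$, and completeness by arguing that every pending dependency is eventually resolved. Your soundness half is more careful than the paper's. The paper only argues that $\mathit{eval}$ yields a value once the accessed values are in $R$. But the offset and hold searches also count entries of $U$ (via $\mathit{hasEvent}$) and skip positions that have left $M$. Your invariants~2 and~3 are exactly what makes the counted positions agree with $\mathit{Sfx}(\world,s,t)$.

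There are two small omissions. You assume rather than show that the inner while loop terminates; the paper gets this from the one-way flow $M \to U \to R$ over finite sets. Also, part~2, and your completeness conclusion, only make sense for $v \neq \bot$, since absent inputs and outputs whose pacing evaluates to false are never inserted into $R$.

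The genuine gap is where your well-founded order comes from. You derive acyclicity of the evaluation graph from ``well-definedness together with Theorem~\ref{thm:WF_WD}'', but that theorem says well-formed implies well-defined. Lemma~\ref{lemma:evaluation_graph} likewise only says that acyclicity for every input implies well-definedness. In \Cref{app:wd_proof}, acyclicity is derived from the well-formedness criteria, which this theorem does not assume.

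Nor can acyclicity be recovered from well-definedness. That notion only requires the equations to have exactly one solution. An instance may read its own current value while the equations still pin down a unique value, e.g.\ $a = a \cdot 0$ with a synchronous self-access, whose only present-valued solution is $0$. \Cref{algo:monitor} evaluates an expression only once every accessed value is in $R$, so it waits on such an instance forever. Hence your descending-chain argument cannot be run from well-definedness alone.

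The paper's proof passes over exactly this point with ``by well-definedness of $\varphi$, all dependencies are eventually resolved'', so you have located the weak spot correctly. To close it, you need to assume well-formedness, or acyclicity of the evaluation graph for every input. Even then, the argument of \Cref{app:wd_proof} does not carry over verbatim. The paper's evaluation graph has edges only to accessed values. Your ``must be decided'' graph also links $(s,t)$ to every intermediate $M$-entry that an offset or hold search passes over. Showing that these pacing edges close no cycle needs its own argument.
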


The theorem establishes that the monitoring algorithm is both sound and complete with respect to the semantics of \FRTLola.
Soundness ensures that every value reported by the monitor algorithm corresponds to the value in the evaluation model, while completeness guarantees that all values in the evaluation model are eventually produced by the algorithm.
\ifthenelse{\boolean{fullversion}}{
The proof can be found in \Cref{app:algo_proof}.
}{
The proof can be found in the full version~\cite{fullversion} of this paper.
}

\subsection{Memory Analysis}

The memory bound of a stream specifies how many past values must be retained to ensure the correct evaluation of all stream expressions in a specification.
This bound may be unbounded, as is the case for the specification in \Cref{fig:ltl_altitude_spec}, because we can not statically know how long between two takeoff events.
A specification with a finite memory bound is called \emph{efficiently monitorable}, a term originally introduced by \Lola~\cite{DBLP:conf/time/DAngeloSSRFSMM05}.
In contrast to \Lola, our algorithm does not partially evaluate unresolved expressions.
While \Lola can discard intermediate results once they are inlined in the expression, we retain all accessed values until a stream expression can be fully evaluated.

\begin{figure}[t]
\begin{center}
\begin{minipage}{0.65\linewidth}
\begin{lstlisting}
input alt : Float64
output n_alt := alt.offset(by:1, or:0.0)
output slope := n_alt.offset(by:-1, or:0.0)
                  - n_alt.offset(by:1, or:0.0)
\end{lstlisting}
\end{minipage}
\hfill
\begin{minipage}{0.34\linewidth}
\hfill
\hspace*{-3mm}
\begin{tikzpicture}[
    n/.style={draw,circle,fill=black,inner sep=0,minimum width=4mm},
    ur/.style={n,red,fill=white,execute at begin node={\scriptsize\strut ?}},
    fg/.style={n,dotted,gray,fill=white,execute at begin node={\scriptsize\strut $\checkmark$}},
    rs/.style={n,green!50!black,fill=white,execute at begin node={\scriptsize\strut $\checkmark$}},
    shorten >=0.5mm
]
    \def\f{0.6}
    \def\l{3.3}
    \def\t{0.7}
    \draw[-{Stealth}] (0,0) node[left,font=\scriptsize] {\lstinline!alt!} -- ++(\l,0);
    \draw[-{Stealth}] (0,-1*\f) node[left,font=\scriptsize] {\lstinline!n_alt!} -- ++(\l,0);
    \draw[-{Stealth}] (0,-2*\f) node[left,font=\scriptsize] {\lstinline!slope!} -- ++(\l,0);

    \foreach \i/\s in {0/fg,1/fg,2/fg,3/rs}{
        \node[\s] (a\i) at (\t*\i+0.5,0) {};
    }
    \foreach \i/\s in {0/rs,1/rs,2/rs,3/ur}{
        \node[\s] (b\i) at (\t*\i+0.5,-1*\f) {};
    }
    \foreach \i/\s in {0/fg,1/rs,2/ur,3/ur}{
        \node[\s] (c\i) at (\t*\i+0.5,-2*\f) {};
    }

    \draw[->] (c1) -- (b0);
    \draw[->] (c1) -- (b2);
    \draw[->] (b2) -- (a3);
    
\end{tikzpicture}
\end{minipage}
\end{center}
\caption{Example specification with corresponding evaluation graph.}
\label{fig:unresolved_resolved_example}
\end{figure}

We illustrate this with the example in \Cref{fig:unresolved_resolved_example}.
The figure on the right illustrates an evaluation graph where unresolved values are shown in red, and resolved values are marked with a checkmark.
Green nodes indicate values that are still required for future evaluations, whereas gray nodes are no longer needed and can be discarded.
From this graph, the memory bound can be read directly:
for \lstinline!alt!, the bound is 1 since only the current value must be retained.
For \lstinline!n_alt! the bound is 4, as the fourth last value is still required to evaluate \lstinline!slope!.
This is because the execution of \lstinline!slope! is delayed until all dependencies are resolved.

This delay is captured by the \emph{delay} measure~\cite{DBLP:conf/rv/FinkbeinerOPS20}, which assigns to each stream the maximal temporal displacement induced by its dependencies.
Intuitively, the delay $\triangle(s)$ of a stream $s$ measures how far into the future the monitor must look to compute a new value for $s$.
It equals the largest total offset accumulated along any path originating at $s$ in the dependency graph.
The delay becomes infinite when any such path passes through an asynchronous access to a stream that itself has a non-zero delay, since the waiting time is then unbounded.

Based on this notion, we define the memory bound of a stream.
\begin{definition}[Memory Bound]
    Let $\varphi$ be a well-defined \FRTLola specification with the corresponding dependency graph $G_\varphi=\langle V, E \rangle$.
    For a stream $s$, the \emph{memory bound} $\mu(s)$ is defined as:
    \[
        \mu_\varphi(s) = \max \left\{ \triangle(s), \max \left\{ \triangle(s') - k \mid (s',(\_ , k),s) \in E \right\} \right\} + 1.
    \]
\end{definition}

The following theorem shows that this bound is sufficient for the monitoring algorithm and therefore characterizes the amount of memory that must be retained for each stream.
\ifthenelse{\boolean{fullversion}}{
The proof can be found in \Cref{app:proof_mem_bound}.
}{
The proof can be found in the full version~\cite{fullversion} of this paper.
}
\begin{theorem}[Memory Bound Soundness]\label{thm:memory_bound}
Let $\varphi$ be a well-defined \FRTLola specification and let $\mu_\varphi$ be its memory-bounds.
During the execution of \Cref{algo:monitor}, for every stream $s$ and every value $(s,v,t)\in R$, if there exist $\mu(s)$ newer values of $s$ in $R$, then $(s,v,t)$ can be removed from $R$ without affecting the correctness of the algorithm.
\end{theorem}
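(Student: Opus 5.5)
The plan is to show that once $\mu(s)$ newer values of $s$ exist in $R$, no pending or future evaluation can still need $(s,v,t)$. Removing such an entry then cannot change any result of $\mathit{eval}$, and so cannot change the run of \Cref{algo:monitor}. Correctness is then inherited from \Cref{thm:algorithm}.

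First I would fix the possible readers of a value of $s$. By \Cref{def:dependency_graph}, an expression can only read $(s,v,t)$ in two situations:
\begin{enumerate}
\item it is the expression of $s$ itself, through recursive offsets and the delay of its own evaluation;
\item it is the expression of some $s'$ with an edge $(s',(\_,k),s)\in E$.
\end{enumerate}
Asynchronous hold and aggregation accesses need separate treatment. A hold always takes the most recent value, which is among the $\mu(s)\ge 1$ newest. For aggregations I would assume windows are computed incrementally, so they keep their own summary and do not use $R$. This matches the memory-bound convention of the original analysis~\cite{DBLP:phd/dnb/Schwenger22}, and I would state it explicitly. With that settled, only offset edges with weight $k$ matter.

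Second, I would prove a delay lemma for a fixed stream. If an evaluation point of $s'$ is still in $U$ or $M$, then at most $\triangle(s')$ values of $s'$ newer than it have been produced. I would prove this by induction along the evaluation order from the well-formedness analysis (\Cref{def:well-formedness}). An instance of $s'$ at position $p$ becomes evaluable as soon as all of its dependencies are resolved. Each offset-$k$ dependency on a stream $s''$ waits for position $p+k$ of $s''$, which in turn waits at most $\triangle(s'')$ further positions. Taking the maximum along paths gives exactly the definition of $\triangle$. The infinite-delay case makes the bound $\mu(s)=\infty$, so the statement becomes vacuous and needs no argument.

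Third, I would combine the two. Take a pending instance of $s'$ that reads $s$ at offset $k$. The value it reads lies $k$ positions relative to that instance. The instance itself lags the newest value by at most $\triangle(s')$. So the accessed value of $s$ is at most $\triangle(s')-k$ positions older than the newest value of $s$. For $s$ itself, pending instances lag by at most $\triangle(s)$. Taking the maximum and adding $1$ for the current value gives exactly $\mu(s)$. Hence if $\mu(s)$ newer values exist, $(s,v,t)$ is outside every reachable access window.

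One further point must be checked. The search procedure in $\mathit{eval}$ walks through $R$ and $U$ and returns $?$ when it meets an entry of $M$. Removing entries must not change the index counting in this search. I would argue that all searches start from positions inside the window and count only within it, so the counts are unaffected.

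The main obstacle is the delay lemma. Positions of $s$ and $s'$ are not aligned in the asynchronous setting, so I would first try to phrase the offsets relative to the accessing stream's own positions. I expect the well-formedness criterion on positive offsets combined with hold or aggregation edges to be exactly what rules out the misaligned cases.
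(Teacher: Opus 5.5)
Your outline is the paper's proof. It restricts attention to $s$ itself and the direct readers $(s',(\_,k),s)\in E$, uses that a pending instance of $s'$ waits for at most $\triangle(s')$ future events, and concludes that the value read lies at most $\triangle(s')-k$ positions behind the newest value of $s$, then maximizes and adds one. However, the step you flag as the main obstacle is a genuine gap. Well-formedness cannot close it: criterion~3 of \Cref{def:well-formedness} only excludes certain cycles, and the misalignment already occurs without any cycle.

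Take inputs \texttt{x}, \texttt{y} and one output \texttt{a}, paced on \texttt{x}, with expression \texttt{x.offset(by: 1, or: 0) + y.hold(or: 0)}. The dependency graph is acyclic, so the specification is well-formed and well-defined. We have $\triangle(\mathtt{x})=\triangle(\mathtt{y})=0$, and the hold targets a stream of delay $0$. Hence $\triangle(\mathtt{a})=1$ and $\mu(\mathtt{y})=\max\{0,1-0\}+1=2$. Let \texttt{x} and \texttt{y} arrive at time $1$ (with a nonzero \texttt{y}-value), then only \texttt{y} at times $2,\dots,11$, and \texttt{x} again at time $12$.
\begin{itemize}
\item The instance $(\mathtt{a},1)$ stays in $U$ until time $12$ and then needs the \texttt{y}-value from time $1$.
\item From time $3$ on, two newer \texttt{y}-values are in $R$, so the theorem permits deleting that value.
\item Afterwards the hold falls back to its default, and \texttt{a} at time $1$ is computed incorrectly.
\end{itemize}
The same happens with a synchronous access \texttt{y} under pacing $\mathtt{x}\land\mathtt{y}$: the instance then never resolves.

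The failure in your third step is a mismatch of units. $\triangle(s')$ bounds how many future events of the streams that $s'$ waits on must arrive, and your delay lemma bounds the lag in positions of $s'$. By contrast, $\mu(s)$ counts positions of $s$. An offset or synchronous access only guarantees that every event of $s'$ is an event of $s$, and a hold guarantees nothing. So $s$ may tick unboundedly often while an instance of $s'$ is pending.

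The same example refutes your first-step claim about hold. A hold from $(s',t')$ returns the latest value of $s$ at or before $t'$, not the latest value overall. That value can fall arbitrarily far behind while $(s',t')$ waits.

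The paper's proof makes exactly the same unjustified move (``$s'$ will then access a strictly newer event of $s$''), so you are not missing an idea it supplies. A correct version needs one of the following:
\begin{itemize}
\item an additional hypothesis making positions of $s$ commensurable with those of the streams $s'$ waits on, e.g.\ a common pacing as in synchronous \Lola, where your argument does go through;
\item a definition of $\mu$ that becomes $\infty$ otherwise, analogous to how $\triangle$ already becomes $\infty$ for holds on delayed streams.
\end{itemize}

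Your explicit assumption that sliding windows are aggregated outside $R$ is also necessary. With weight $0$ on $\mathit{Aggr}$ edges, $\mu$ cannot cover a real-time window over an event-based stream. This is more careful than the paper, but it is an added hypothesis, not a consequence of the statement.
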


\begin{figure}[t]
\begin{center}
\begin{minipage}{0.65\linewidth}
\begin{lstlisting}
input alt : Float64
output n_alt := alt.offset(by: 1, or:0.0)
output p_alt @1s@ := n_alt.hold(or: 0.0)
\end{lstlisting}
\end{minipage}
\hfill
\begin{minipage}{0.34\linewidth}
\hfill
\hspace*{-3mm}
\begin{tikzpicture}[
    n/.style={draw,circle,fill=black,inner sep=0,minimum width=4mm},
    ur/.style={n,red,fill=white,execute at begin node={\scriptsize\strut ?}},
    fg/.style={n,dotted,gray,fill=white,execute at begin node={\scriptsize\strut $\checkmark$}},
    rs/.style={n,green!50!black,fill=white,execute at begin node={\scriptsize\strut $\checkmark$}},
    shorten >=0.5mm
]
    \def\f{0.55}
    \def\l{3.3}
    \def\t{0.24}
    \draw[-{Stealth}] (0,0) node[left,font=\scriptsize] {\lstinline!alt!} -- ++(\l,0);
    \draw[-{Stealth}] (0,-1*\f) node[left,font=\scriptsize] {\lstinline!n_alt!} -- ++(\l,0);
    \draw[-{Stealth}] (0,-2*\f) node[left,font=\scriptsize] {\lstinline!p_alt!} -- ++(\l,0);

    \node[rs] (a0) at (\t*0+0.5,0) {};
    \node[ur] (b0) at (\t*0+0.5,-1*\f) {};
    \draw[->]  (b0) -- ++(2cm,4mm) node[right,fill=white] {?};

    \foreach \i in {0,...,3}{
        \node[ur] (c\i) at (0.7*\i+0.5,-2*\f) {};
    }
    \begin{scope}[on background layer]
    \draw[dashed,->,shorten >=0.3mm] (c0.north) -- (b0);
    \draw[dashed,->,shorten >=2mm] (c1.north) -- (b0);
    \draw[dashed,->,shorten >=3mm] (c2.north) -- (b0);
    \draw[dashed,->,shorten >=4mm] (c3.north) -- (b0);
    \end{scope}
\end{tikzpicture}
\end{minipage}
\end{center}
\caption{Example for an unbounded specification without dependency cycles.}
\label{fig:unbounded_hold_example}
\end{figure}

Note that, in contrast to \Lola, unbounded memory requirements do not necessarily arise from dependency cycles.
They can also occur due to unresolved asynchronous accesses.
Consider the specification in \Cref{fig:unbounded_hold_example}:
The stream \lstinline!n_alt! depends on the next value of \lstinline!alt! and remains unresolved until that value arrives.
The periodic stream \lstinline!p_alt! accesses \lstinline!n_alt! asynchronously, and therefore cannot be resolved until the next altitude reading arrives.
Since the arrival time of the next input value is not known statically, the monitor cannot determine how many evaluations of \lstinline!p_alt! will occur before \lstinline!n_alt! becomes available.

\subsection{Evaluation Order}

The monitoring algorithm in \Cref{algo:monitor} operates as a fixpoint computation, iterating until no more values can be resolved.
In practice, a well-chosen order can avoid unnecessary iterations.
An evaluation order already existed for \RTLola~\cite{DBLP:phd/dnb/Schwenger22} where the dependency graph induces a topological order on streams.
However, the original analysis considered only zero-offset edges, since past offset values are already computed in previous iterations.
Future offsets break this assumption: a stream with a future offset access can only be evaluated once the referenced future value has been resolved.
By accounting for the delay $\triangle$ of each stream, the dependency graph can be adjusted so that future offset accesses resemble synchronous zero-offset dependencies, from which a valid stream evaluation order is obtained by topological sort.
Within each iteration, timepoints are processed in descending order: once a future value is resolved, all earlier stream instances that depend on it are resolved in the same pass, drastically reducing the number of fixpoint iterations.
For the specification in \Cref{fig:unresolved_resolved_example} this yields the stream order $\texttt{alt} \prec \texttt{n\_alt} \prec \texttt{slope}$.

%% file: content/evaluation.tex
\section{Evaluation}\label{sec:evaluation}

In this section, we evaluate our implementation based on the \RTLola StreamIR Interpreter~\cite{DBLP:conf/cav/BaumeisterCFS25}.
\ifthenelse{\boolean{fullversion}}{
All results were obtained on a system with a 13th Gen Intel Core i7-1355U, and the corresponding specifications are provided in \Cref{app:specs}.
}{
All results were obtained on a system with a 13th Gen Intel Core i7-1355U, and the corresponding specifications are provided in the full version~\cite{fullversion} of this paper.
}

\begin{figure}[t]
    \begin{subfigure}{0.495\linewidth}
    \includegraphics[width=\linewidth]{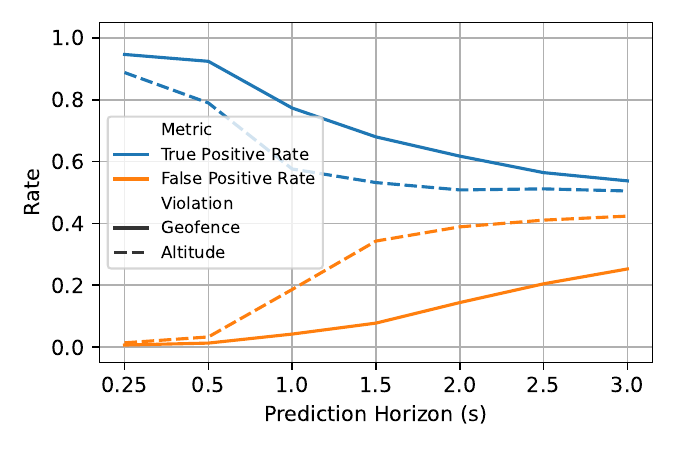}
    \caption{True and false positive rates over prediction horizon.}
    \label{fig:prediction_rates}
    \end{subfigure}
    \hfill
    \begin{subfigure}{0.495\linewidth}
    \includegraphics[width=\linewidth]{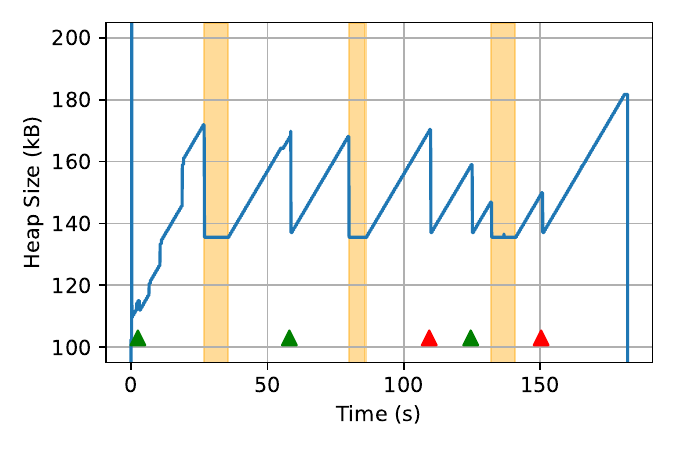}
    \caption{Memory usage for unbounded specification.}
    \label{fig:memory_usage}
    \end{subfigure}
\end{figure}

\paragraph{Prediction.}
To evaluate the prediction operator, we make use of a specification from \cite{DBLP:conf/rv/BaumeisterFS25}, which involved monitoring the altitude and geofence bounds of a drone.
Here, we evaluate the true positive rate (TPR) and false positive rate (FPR) when predicting trigger violations for different prediction horizons.
For each prediction horizon, we obtain a predicted trigger verdict by extrapolating the monitored value across the horizon and checking it against the threshold, and compare it to the actual trigger verdict once that time is reached.
All results use a fixed observation window of 5 values and a linear prediction method, evaluated across the 10 drone flight traces from \cite{DBLP:conf/rv/BaumeisterFS25} recorded in the AirSim simulator.

\Cref{fig:prediction_rates} shows the TPR and FPR for both violation types across prediction horizons ranging from 0.5 to 3.0 seconds.
At 0.25 seconds, the geofence monitor achieves a TPR of $94.7\%$ at an FPR of $0.7\%$, while the altitude monitor yields a TPR of $88\%$ at an FPR of $1.4\%$.
As expected, performance degrades with increasing horizon, with the altitude being affected more strongly than the geofence.
These results demonstrate that the prediction operator can detect violations more than one second in advance with a high accuracy, which is significant for cyber-physical systems operating on a millisecond timescale.

\paragraph{Bounded Specifications.}
To evaluate bounded specifications, we monitor a geofence recovery requirement: whenever the system exits the geofence, it must correct itself and re-enter the geofence within 2s.
This obligation is most naturally expressed using a future real-time offset, which directly mirrors the forward-looking deadline in the requirement:
\begin{lstlisting}
output @1Hz@ outside_fence : Bool := ...
trigger $\neg$(outside_fence $\Rightarrow$ $\neg$outside_fence.offset(by: 2s, or: false))
\end{lstlisting}
The same property can equivalently be expressed using only past-time offsets, at the cost of inverting the reading:
\begin{lstlisting}
trigger $\neg$(outside_fence.offset(by: -2s, or: false) $\Rightarrow$ $\neg$outside_fence)
\end{lstlisting}

\begin{wrapfigure}[12]{r}{0.38\textwidth}
  \vspace*{-4mm}
  \centering
  \includegraphics[width=\linewidth]{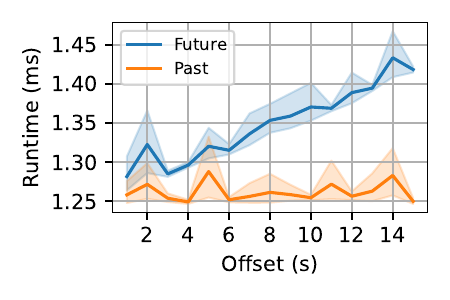}
  \caption{Runtime comparison.}
  \label{fig:bounded_runtime}
\end{wrapfigure}
We evaluate the runtime overhead and memory consumption of both formulations by monitoring all traces consecutively across 200 runs.
\Cref{fig:bounded_runtime} compares the runtime of both specifications for different offset values.
While the runtime of the past-time specification remains essentially constant as the offset increases, the runtime of the future-time specification grows steadily.
This is because unresolved values must be revisited until they can be resolved, causing the monitoring algorithm to process an increasing number of pending values.
Nevertheless, even at the largest evaluated offset, the future-time specification incurs only a $13.5\%$ runtime overhead compared to its past-time counterpart, and without the cost of inverting the timepoints of the triggers.
In contrast, memory consumption is nearly identical for both specifications and increases only marginally with larger offsets.

\paragraph{Unbounded Specifications.}
Finally, we evaluate the memory usage of monitors with unbounded memory.
\Cref{fig:memory_usage} shows the memory usage for three drone flight traces from \cite{DBLP:conf/rv/BaumeisterFS25} while monitoring the specification from \Cref{fig:ltl_altitude_spec}.
During monitoring, triangles at the bottom of the plot mark takeoff events, and orange-shaded regions indicate intervals during which the target altitude is reached.
As visible in the figure, the memory resets whenever either a takeoff or an altitude reached event occurs.
Since takeoff events occur at roughly known intervals, the memory consumption is effectively bounded in practice.

%% file: content/conclusion.tex
\section{Conclusion}


We have extended \RTLola with two complementary mechanisms for reasoning about future stream values: a prediction operator and a future-offset operator.
The prediction operator applies forecasting models to a sliding window of past observations, enabling lightweight, immediate estimates of future values.
The future-offset operator provides exact access to future stream values at the cost of a delayed evaluation.

These two mechanisms are particularly natural in combination: predictions can be issued immediately, while future-offset accesses later confirm or refute them against the ground truth.
Moreover, since predictions are inherently future-directed, it is natural to express the confirming specification in the same future-oriented style.
This improves the expressiveness and readability of specifications, while introducing only minor computational overhead.

As future work, we plan to investigate the automatic rewriting of bounded future-offset specifications into equivalent past-time specifications.
This would allow users to benefit from future-oriented specifications while avoiding their runtime overhead, by adapting evaluation timestamps such that verdicts are reported at the correct points in time.

%% file: content/appendix.tex
\section{Evaluation Graph}

\begin{definition}[Evaluation Graph]
Given a specification $\varphi$ and an input trace $I$, the evaluation graph
$\EG{I}$ is the least pair $(V,E)$ satisfying:

\begin{itemize}
    \item $V \subseteq ID^\uparrow \times Time$, where
    $(sid,t) \in V$ iff
    \begin{itemize}
        \item $t$ satisfies the pacing constraint of $sid$, and
        \item the \lstinline!when!-condition of $sid$ at time $t$
        evaluates to $\top$ using only values of stream instances
$(sid',t') \in V$.
    \end{itemize}

    \item $E \subseteq V \times V$, where
    $((sid,t),(sid',t')) \in E$ iff
    $(sid',t') \in V$ and the evaluation of $sid'$ at time $t'$
    accesses the value of $sid$ at time $t$
    (according to the \RTLola expression semantics).
\end{itemize}

If no such pair $(V,E)$ exists, then $\EG{I}$ is undefined.
\end{definition}

\begin{lemma}\label{lemma:evaluation_graph}
    If for every input trace $i$ the evaluation graph $\EG{I}$  exists and is acyclic, then $\varphi$ is well-defined.
\end{lemma}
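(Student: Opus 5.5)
We fix an input stream assignment $I$ and a time map $T$ and prove existence and uniqueness of $\world \in \semantics{\varphi}$ agreeing with $I$ and $T$. Existence comes from a well-founded evaluation along $\EG{I}$; uniqueness comes from an induction along the same order comparing an arbitrary model with the constructed one. The premise gives that $\EG{I}$ exists and is acyclic. Since $V$ is a subset of the finite set $(\Iref\uplus\Oref)\times\Time$ (recall $\Time$ is bounded by $T_{\mathit{max}}$), acyclicity yields a topological order $(v_1,\dots,v_m)$ of $V$. Each edge $((sid,t),(sid',t'))$ then goes from an earlier to a later node.

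\textbf{Existence.} Input instances are fixed by $I$ and $T$. Processing $v_1,\dots,v_m$ in order, we define $\world(sid)(t)$ for $(sid,t)\in V$ with $sid\in\Oref$ as the result of evaluating $\varphi(sid)$ at $t$. The required accessed values are exactly those of the $E$-predecessors, which are already defined. For every output instance not in $V$ we set $\world(sid)(t)=\bot$. The timestamps come from $T$, so strict monotonicity of $\world(t)$ is inherited from $T$; we assume, as the semantics requires, that admissible time maps are strictly increasing.

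We must then verify $\varphi(sid)\Downarrow^t_\world \world(sid)(t)$ for all $sid$ and $t$. For $(sid,t)\in V$, the pacing and \lstinline!when! checks succeed by the definition of $V$, and the value matches by construction. For $(sid,t)\notin V$, either the pacing constraint fails or the \lstinline!when!-condition does not evaluate to $\top$, so the semantics yields $\bot$.

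The subtle point is offset, hold and aggregation accesses, whose meaning (e.g.\ $\mathit{Sfx}(\world,s,t)$ in \textsc{FO-Hit}/\textsc{FO-Miss}) depends on which instances are non-$\bot$, i.e.\ on membership in $V$. This is where the minimality of $\EG{I}$ (``least pair'') is needed. Because $V$ is already fixed as a set, the positions that a future or past offset reaches are determined before any values are computed. Hence the dependencies recorded in $E$ are exactly the value dependencies, and the same holds for the \lstinline!when!-conditions, which by definition only read instances in $V$.

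\textbf{Uniqueness.} Let $\world'$ be any model in $\semantics{\varphi}$ agreeing with $I$ and $T$. We show $\world'=\world$ in two steps. First, we show that the set $V'=\{(sid,t)\mid \world'(sid)(t)\neq\bot\}$, together with the access edges it induces, satisfies the defining closure conditions of the evaluation graph. Combined with leastness and a second induction along the topological order, this gives $V'=V$: no instance of $V$ can be absent in $\world'$, since its pacing and filter are computed from earlier agreeing values, and no extra instance can appear, since for it the filter would evaluate to $\top$ from agreeing values, which forces it into $V$. Second, with $V'=V$ established, induction along $v_1,\dots,v_m$ shows that each value in $\world'$ equals the one in $\world$, because the expression is evaluated on identical predecessor values.

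\textbf{Main obstacle.} I expect the hardest step to be this interplay between the \emph{shape} of the model (which instances are non-$\bot$) and the values. Future offsets make $\mathit{Sfx}$ depend on later pacings, so a naive time-ordered induction fails. The argument must therefore carry out the induction on the graph order, simultaneously for existence and values, and use leastness to exclude spurious self-justifying instances. A cycle through a filter would exactly enable such a self-justifying instance, and acyclicity of $\EG{I}$ is what rules it out.
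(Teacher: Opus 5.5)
Your overall route is the paper's: take a topological order of the acyclic $\EG{I}$, build $\world$ by recursion along it, and prove uniqueness by induction along it. You also see something the paper's proof skips. Its uniqueness induction runs over $V$ for both models and never asks whether a second model $\world'$ has the same set of non-$\bot$ instances; you correctly identify this as the crux. However, your argument for $V' = V$ does not go through. Leastness can at best give $V \subseteq V'$. For $V' \subseteq V$ you argue that an extra instance's filter ``would evaluate to $\top$ from agreeing values''. But an instance of $V'\setminus V$ has no position in the topological order of $V$. Its filter, evaluated in $\world'$, may read other instances of $V'\setminus V$ (via \texttt{hold}, windows, or offsets whose targets depend on the support), and those are not agreeing values. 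Your closing claim that acyclicity of $\EG{I}$ rules out self-justifying instances has it backwards. Leastness guarantees that $\EG{I}$ never contains such instances, so its acyclicity says nothing about cycles among them.

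The step cannot be repaired under the stated hypothesis. Take an input $x$ and outputs $a$, $b$, both paced on $x$ with value expression \texttt{true}. Give $a$ the filter \texttt{b.hold(or: false)} and $b$ the filter \texttt{a.hold(or: false)}. The paper reads \texttt{hold} as the latest value at or before the current time. Under that reading, two models satisfy the semantics:
\begin{itemize}
\item the one where $a$ and $b$ are always $\bot$;
\item the one where both are \texttt{true} at every $x$-event.
\end{itemize}
So $\varphi$ is not well-defined. Yet for every input trace, the least pair contains no instance of $a$ or $b$ and no edges, so $\EG{I}$ exists and is acyclic.

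This specification violates well-formedness criterion (1), so Theorem~\ref{thm:WF_WD} itself is not contradicted. Its proof, however, relies on the contrapositive of this lemma and breaks on this example.

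A correct version needs a hypothesis that does not depend on which instances the least solution happens to realize. One option is acyclicity of a dependency relation on all candidate instances, meaning those whose static pacing holds. Its edges should record every instance a filter or value expression may inspect. That includes the intermediate instances a \texttt{hold}, window, or offset must check for presence, and the filter accesses of candidates whose filter fails. Induction along that relation fixes the presence and value of each candidate at the same time, which is exactly what gives uniqueness of the support.
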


\begin{proof}
    Let $I$ be an arbitrary input trace.
    We show that there exists a unique evaluation model $\world$ satisfying the stream-equations of $\varphi$.
    Since $\EG{I}$ exists, the set of evaluated stream instances is well-defined.
    Note that since $\EG{I}$ is acyclic, it admits a topological ordering $\prec$ on $V$.
    \begin{itemize}
        \item \textbf{Existence.} 
        We define $\world$ recursively along $\prec$:
        \begin{itemize}
            \item \textbf{Base Case:} For $(\sid, t)$ with $\sid \in \Iref$, set $\world(\sid)(t) = I(\sid)(t)$.
            \item \textbf{Recursive Case:} For $(\sid, t)$ with $\sid \in \Oref$, all dependencies $(\sid', t')$ of $(\sid, t)$ satisfy $(\sid',t')\prec (\sid,t)$, so $\world(\sid')(t')$ is already defined.
            We set $\world(\sid)(t)$ by evaluating the expression $\sid$ at $t$ using the already evaluated values.
        \end{itemize}
        \item \textbf{Uniqueness.} Suppose $\world,\world'$ are two evaluation models for $I$.
        We show $\world = \world'$ by induction along $\prec$.
        \begin{itemize}
            \item \textbf{Base Case:} For $\sid \in \Iref$, both models must satisfy $\world(\sid)(t) = I(\sid)(t)$.
            \item \textbf{Induction Step:} For $(\sid,t)$ with $\sid \in \Oref$, all dependencies $(\sid',t') \prec (sid,t)$ satisfy $\world(\sid')(t') = \world'(\sid')(t')$ by induction hypothesis.
            Since the expression of $\sid$ is a pure function of its dependencies, $\world(\sid)(t) = \world'(\sid)(t)$.
        \end{itemize}
    \end{itemize}
\end{proof}

\section{Proof for \Cref{thm:WF_WD}}\label{app:wd_proof}
\begin{proof}
    We prove the contraposition: If $\varphi$ is not well-defined, then $\varphi$ is not well-formed.
    
    Assume $\varphi$ is not well-defined.
    By the contraposition of \Cref{lemma:evaluation_graph}, there exists an input trace $I$ such that the evaluation graph $\EG{I}$ either does not exist or contains a cycle.
    \begin{itemize}
        \item \textbf{$\EG{I}$ does not exist.}
        The evaluation graph is undefined if it is not possible to construct the vertex set $V$ satisfying all pacing and \lstinline!when!-conditions.
        If the dependency relation induced by these conditions were acyclic, then $V$ could be constructed in a topological order.
        Therefore, the non-existence of $V$ implies that the dependency relation is cyclic.
        Since only \lstinline!when!-conditions determine whether a vertex
$(sid,t)$ exists, any such cycle must contain a Filter-edge.
        This contradicts the well-formedness criterion (2).
containing Filter-edges.
        \item \textbf{$\EG{I}$ contains a cycle.}
        Let
        \[
            C = (\sid_1, t_1) \rightarrow (\sid_2, t_2) \rightarrow \ldots \rightarrow (\sid_n, t_n) \rightarrow (\sid_1, t_1)
        \]
        be a cycle in $\EG{I}$.
        We define $\Delta_{C,i} = t_{i+1} - t_i$ for edge $i$ in $C$.
        Since $C$ is a cycle:
        \[
            \sum_{i=1}^n \Delta_{C,i} = 0
        \]
        Note that if edge $i$ is a
        \begin{itemize}
            \item discrete offset $o$: $\Delta_{C,i} = o \in \ZZ$,
            \item hold access: $\Delta_{C,i} \le 0$, since it accesses the latest value at or before the current time,
            \item aggregate: $\Delta_{C,i} \le 0$, since it accesses values in a past window.
        \end{itemize}
        Further, note that we count a synchronous access as one with an offset of 0.

        We distinguish two cases:
        \begin{itemize}
            \item \textbf{Every Hold and Aggregate Edge in $C$ has $\Delta_{C,i} = 0$.}
            \[
                \sum_{i=1}^n \Delta_{C,i} = \sum_{\text{edge $i$ is offset}} o_i + \sum_{\text{edge $i$ is hold/agg}} 0 = 0
            \]
            In this case, $C$ induces a cycle in the dependency graph with accumulated weight 0.
            This violates the well-formedness condition (1), so $\varphi$ is not well-formed.
            \item \textbf{There exists Hold or Aggregate in $C$ with $\Delta_{C,i} < 0$}
            Since all Hold and Aggregate edges satisfy $\Delta_{C,i} \le 0$, and at least one is strictly negative, their total contribution is negative:
            \[
                \sum_{\text{edge $i$ is hold/agg}} \Delta_{C,i} < 0.
            \]
            Since the total sum of $C$ is zero, the offset edges must compensate:
            \[
                \sum_{\text{edge $i$ is offset}} o_i > 0.
            \]
            Hence, there exists at least one positive offset edge $k$ in C with $o_k > 0$.
            The cycle $C$ therefore induces a cycle in the dependency graph containing a positive offset edge and either an aggregate or an asynchronous hold edge.
            This contradicts the well-formedness condition (3).
        \end{itemize}
    \end{itemize}
\end{proof}

\section{Periodic Well-Formedness}\label{app:period_cycle}
Consider a \FRTLola specification $\Phi$ and some cycle $C = \{(s_1, op, s_2), \dots , (s_n, op, s_1)\}$ in its corresponding dependency graph.\\
First compute the greatest common divider among all pacings \[gcd = GCD(Pace(s_1), Pace(s_2), \dots , Pace(s_n))\] Afterwards, transform the following edges:
\begin{lstlisting}
    output a @pa@ := b.offset(by: off) $\Rightarrow$ 
    output a @gcd@ := b.offset(by:$\frac{pb \cdot off}{gcd}$)
\end{lstlisting}
\begin{lstlisting}
    output a @pa@ := b.hold() $\Rightarrow$
    output a @gcd@ := 
        if Time mod pb$\equiv \frac{pb}{gcd} - 1$ then b.offset(by:$-\frac{pb}{gcd} + 1$) else
        if Time mod pb$\equiv \frac{pb}{gcd} - 1 - gcd$ then b.offset(by:$-\frac{pb}{gcd} + 1 + gcd$) else
        if Time mod pb$\equiv \frac{pb}{gcd} - 1 - 2gcd$ then b.offset(by:$-\frac{pb}{gcd} + 1 + 2gcd$) else
        ...
        if Time mod pa$\equiv$ 0 then b
\end{lstlisting}
\begin{lstlisting}
    output a @pa@ := b.aggregate(over: t, with: $\delta$) $\Rightarrow$
    output a_values @gcd@ := 
        (b.offset(by:$-\lceil\frac{t}{gcd}\rceil + 1$),
        b.offset(by:$-\lceil\frac{t}{gcd}\rceil + 1 + gcd$),
        b.offset(by:$-\lceil\frac{t}{gcd}\rceil + 1 + 2gcd$),
        ...
        b)
    output a @gcd@ := $\delta$(a_values)
\end{lstlisting}

\begin{example}
    Given some \FRTLola specification:
\begin{lstlisting}
    output a @6s@ := b.offset(by: +1).defaults(to:0)
    output b @3s@ := c.aggregate(over: 2s, using: sum)
    output c @2s@ := a.hold(or: 0)
\end{lstlisting}
Now, we translate it according to the construction above:
\begin{lstlisting}
    output a @1s@ := b.offset(by:+3).defaults(to:0)
    output b_values @1s@ := (c.offset(by:-1).defaults(to:0), b)
    output b @1s@ := sum(b_values)
    output c @1s@ := if Time mod 6s $\equiv$ 5 then a.offset(by: -5) else
                        if Time mod 6s $\equiv$ 4 then a.offset(by: -4) else
                        if Time mod 6s $\equiv$ 3 then a.offset(by: -3) else
                        if Time mod 6s $\equiv$ 2 then a.offset(by: -2) else
                        if Time mod 6s $\equiv$ 1 then a.offset(by: -1) else
                        if Time mod 6s $\equiv$ 0 then a
\end{lstlisting}
Hence, we obtain a desugared specification, which now contains a cycle with accumulated edge weight of zero.
This means the specification is not well-formed.
\end{example}

\section{Algorithm Correctness}\label{app:algo_proof}

\begin{definition}[Eval]
\begin{align*}
    \mathit{eval}(\mathit{sync}(x), R, U, M, t_m, t, T) &= \begin{cases} v & \text{if } (x, v, t) \in R \\ ? & \text{otherwise} \end{cases} \\
    \mathit{eval}(\mathit{offset}(x, o), R, U, M, t_m, t, T) &=
        \begin{cases}
        \mathit{eval}(\mathit{sync}(x), \ldots) & \text{if } o = 0 \\
        \bot & \text{if } t \notin [0, T_{\max}]\\
        ? & \text{if } t > t_m \lor (x,t) \in M\\
        \mathit{eval}(\mathit{offset}(x, o'), \ldots, t + 1) & \text{if } o > 0\\
        \mathit{eval}(\mathit{offset}(x, o'), \ldots, t - 1) & \text{if } o < 0\\
        \end{cases}\\
    \mathit{eval}(\mathit{hold}(x), R, U, M, t_m, t, T) &= \begin{cases} \mathit{eval}(\mathit{sync}(x), \ldots) & \text{if } (x, \_, t) \in R \\ \mathit{eval}(\mathit{offset}(x, -1), \ldots) & \text{otherwise} \end{cases}
\end{align*}
with
\[
    o' = \begin{cases}
        o + 1 & \text{if } o < 0 \land \mathit{hasEvent}(x, t) \\
        o - 1 & \text{if } o > 0 \land \mathit{hasEvent}(x, t) \\
        o & \text{otherwise} \end{cases}
\]
and
\[
\mathit{hasEvent}(x, t) = (x, \_, t) \in R \lor (x, t) \in U.
\]
\end{definition}

\subsection{Proof for \Cref{thm:algorithm}}

\begin{proof}
We first argue that the while loop terminates in each iteration of the for loop.
The sets $M$, $U$, and $R$ are finite, and entries move monotonically: each entry can only move from $M$ to $U$ or be discarded, and from $U$ to $R$. No entry can be re-added to $M$ once removed.

\paragraph{Soundness.}
We show by induction on the order in which values are added to $R$ that every $(sr, v, t) \in R$ satisfies $\world(sr)(t) = v$.
For the base case, input stream values are added directly from $I$ in iteration $t$, which by definition agrees with $\world$.
For the inductive step, assume all values currently in $R$ are correct.
A new output stream value $(s, v, t)$ is added to $R$ only when $\mathit{eval}(\varphi_s, R, U, M, t_m, t, T) = v$.
By induction on the expression structure of $\varphi_s$, $\mathit{eval}$ returns a concrete value $v$ only when all dependencies are resolved in $R$.
By the induction hypothesis, all values in $R$ are correct, so the returned value $v$ must equal $\world(s)(t)$ by the uniqueness of $\world$, which follows from the well-definedness of $\varphi$.

\paragraph{Completeness.}
We show that every value $\world(sr)(t) = v$ eventually appears in $R$.
For input streams, values are added in iteration $t$ of the for loop directly from $I$.
For output streams, every $(s, t)$ is initially added to $M$ at the start of iteration $t$.
Since $\world$ is a valid evaluation model, the pacing expression $\varphi^p_s$ evaluates to $\mathit{true}$ at every $t$ where $\world(s)(t) \ne \bot$.
Once all dependencies of $\varphi^p_s$ are resolved in $R$, $\mathit{eval}(\varphi^p_s, \ldots)$ returns $\mathit{true}$, moving $(s, t)$ to $U$.
Subsequently, once all dependencies of $\varphi_s$ are resolved in $R$, $\mathit{eval}(\varphi_s, \ldots)$ returns the concrete value $v = \world(s)(t)$, moving $(s, v, t)$ to $R$.
By well-definedness of $\varphi$, all dependencies are eventually resolved across iterations of the for loop, so every output stream value eventually appears in $R$.
\end{proof}

\section{Delay}
Given some specification $\varphi$ and its corresponding dependency graph $G_\varphi$.
Let $\mathit{weight}(k)$ be the weight of some expression kind $k$, defined as:
\[
\mathit{weight}(k) = \begin{cases}
    o &\text{if } k = (\mathit{Offset}, o)\\
    0 &\text{otherwise}
\end{cases}
\]
The delay $\triangle(s)$ of some stream $s$ is given by:
\[
\triangle(s) = \begin{dcases}
    \infty & \begin{array}{l} \text{if } \exists s': (s, (\_, \mathit{Hold}), s') \in E \\ \wedge\; \triangle(s') > 0 \end{array} \\
    \begin{array}{l} \max\Bigl\{0, \max\bigl\{ \mathit{weight}(k) + \triangle(s') \\ \hspace{2em} \mid (s, (\_, k), s') \in E \bigr\}\Bigr\} \end{array} & \text{otherwise}
\end{dcases}
\]



\begin{theorem}[Delay Soundness]
For every stream $s$ and timestep $t$, the evaluation of $s$ at time $t$ requires at most $\triangle(s)$ future values of any accessed stream to already be present in $R$.
\end{theorem}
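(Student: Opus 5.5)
We argue by well-founded induction on the stream $s$ along the dependency graph $G_\varphi$, and inside that by structural induction on the expression $\varphi_s$ (both the \lstinline!with!-expression and the pacing/\lstinline!when!-expression $\varphi^p_s$). The precise claim we prove is: if $\triangle(s)=d<\infty$, then $\mathit{eval}(\varphi_s,R,U,M,t_m,t,T)$ (and likewise for $\varphi^p_s$) only consults entries $(x,\_,t')$ whose position is at most $d$ events of $x$ beyond $t$. In addition, evaluating such an entry recursively forces at most $d$ future events of the streams it depends on. If $\triangle(s)=\infty$ the statement is vacuous.

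The induction is well-founded for the following reason. By Theorem~\ref{thm:WF_WD} and the well-formedness criteria, every cycle in $G_\varphi$ has non-zero accumulated weight. A cycle with positive weight forces $\triangle=\infty$ on the streams it passes through. So the recursive definition of $\triangle$ has a least solution, and for streams with finite delay all cycles reachable from them have negative weight. The induction therefore runs over the finite part of the unfolding of the definition. Concretely, the measure is the length of a longest path in $G_\varphi$ whose prefix sums of $\mathit{weight}$ stay positive.

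The case analysis follows the clauses of $\mathit{eval}$. A synchronous access to $x$ uses $(x,\_,t)$, i.e.\ offset $0$. An access to $x$ with offset $o>0$ steps forward through the events of $x$ via $\mathit{hasEvent}$, so it reaches exactly the $o$-th future event of $x$. A past offset or a hold only looks backwards, or at time $t$. Aggregates are reduced to holds and past offsets by the syntactic-sugar translations given earlier. Composing with the induction hypothesis for $x$, the accessed value $(x,t')$ itself requires at most $\triangle(x)$ further events beyond $t'$. This contributes $\mathit{weight}(k)+\triangle(x)$ to the total, and taking the maximum over all outgoing edges gives exactly the second clause of $\triangle(s)$, truncated below at $0$. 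For a hold edge the definition of $\triangle$ guarantees $\triangle(x)=0$, so the accessed value is available without any lookahead.

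The main obstacle is that offsets count events, not timesteps, and events of different streams are interleaved asynchronously. Resolving the $o$-th future event of $x$ requires knowing that the intervening entries are not still in $M$. The $\mathit{eval}$ function returns $?$ in that case, so we must show that deciding the pacing of $x$ at the intermediate timesteps does not itself need more lookahead than $\triangle(s)$. We plan to handle this by including $\varphi^p_x$ in the induction, since filter edges are part of $E$ and thus already covered by the maximum in $\triangle$. We also use criterion~(2), which rules out filter–offset cycles, so that the induction hypothesis applies to these pacing evaluations.
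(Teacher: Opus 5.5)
Your central step, that the lookahead of $s$ is $\mathit{weight}(k)+\triangle(x)$ maximised over outgoing edges, is the local form of the paper's argument. The paper bounds the lookahead by the maximal accumulated weight along dependency paths from $s$ and identifies that maximum with $\triangle(s)$. The gap is in the induction you build around this step. Induction on streams is circular as soon as $G_\varphi$ has a cycle, and well-formed specifications have them routinely. For example, the self-loop \texttt{start\_lat := start\_lat.offset(by:-1).defaults(to: gps\_lat)} in the appendix specifications needs the hypothesis for $s$ while proving $s$. Your measure does not fix this. A path whose first edge has weight $\le 0$ (synchronous, hold, aggregate, past offset) has a non-positive first prefix sum and is not counted. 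So the measure does not decrease along such edges, and it can even increase. For \texttt{s := x.offset(by: -1, or: 0)} and \texttt{x := y.offset(by: 3, or: 0)}, $s$ has measure $0$ and $x$ has measure $1$, yet this is exactly the path that yields $\triangle(s)=2$. Hence ``composing with the induction hypothesis for $x$'' is unsupported exactly where it matters.

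The repair is the paper's route. If $\triangle(s)<\infty$, the defining equation gives $\triangle(s)\ge\mathit{weight}(k)+\triangle(s')$ for every edge $(s,(\_,k),s')$, with $\triangle(s')=0$ for hold edges. So along any access chain $s=s_0\to\dots\to s_n$, the accumulated weight of each prefix up to $s_i$ is at most $\triangle(s_0)-\triangle(s_i)\le\triangle(s)$. This telescoping needs no well-founded order and is unaffected by zero-weight cycles. You therefore need neither well-formedness nor Theorem~\ref{thm:WF_WD}. That theorem only says well-formed implies well-defined; the non-zero-weight property is criterion~(1) itself. If you prefer an induction, run it over pairs $(s,t)$ along the evaluation graph, which the proof of Theorem~\ref{thm:WF_WD} shows is acyclic for well-formed specifications.

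Two smaller points. First, the paper gives no general translation of aggregates into holds and past offsets; the only one is the gcd construction for periodic cycles in the appendix. Reading an aggregate edge as a hold edge would also change $\triangle$, since only hold edges trigger the $\infty$ clause. Treat aggregates directly as weight-$0$, backward-looking accesses. Second, your handling of undecided $M$-entries through filter edges is a real refinement over the paper's sketch. But the other half of the obstacle you name is left open: $\mathit{weight}(k)+\triangle(x)$ adds event counts of different streams. Along synchronous accesses each accessed stream is at least as densely paced as its accessor. So the sum is a sound bound in a common unit such as the pacing events of $s$, but not when counted in events of a transitively accessed stream. For instance, take inputs $w,v,q$, let \texttt{y} be paced by $w\land v$ with \texttt{y := w.offset(by: 1, or: 0)}, and let \texttt{s} be paced by $w\land v\land q$ with \texttt{s := y.offset(by: 1, or: 0)}. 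Then $\triangle(s)=2$, yet \texttt{s} at $t$ waits for a \texttt{w}-value preceded by arbitrarily many \texttt{w}-only events after $t$. The paper's path-sum argument makes the same implicit assumption, so fix the unit explicitly.
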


\begin{proof}
We distinguish two cases based on whether $\triangle(s)$ is finite or infinite.

\paragraph{Infinite delay:}
$\triangle(s) = \infty$ if $s$ has a Hold edge to a stream $s'$ with $\triangle(s') > 0$, or if $s$ lies on a cycle with positive total weight.
In the Hold case, we cannot bound how long it takes before a new resolved value of $s'$ is produced.

\paragraph{Finite delay:}
If $\triangle(s) < \infty$, all cycles reachable from $s$ have non-positive total weight.
The required number of future values is therefore bounded by the maximum total weight along any path from $s$ in $G_\varphi$, since non-positive cycles do not increase this accumulation.
This maximum is exactly $\triangle(s)$ by definition.
\end{proof}

\section{Proof for \Cref{thm:memory_bound}}\label{app:proof_mem_bound}

\begin{proof}
We show that once there exist $\mu(s)$ newer values of $s$ in $R$, no future call to $\mathit{eval}$ will reference $(s, v, t)$ again.
A value $(s, v, t)$ can only be accessed during the evaluation of some stream $s'$ at event $t'$ that has a direct dependency $(s', (\_, k), s) \in E$, looking up the $k$-th next (or previous) event of $s$ relative to $t'$.
By definition of the delay, the evaluation of $s'$ at event $t'$ requires at most $\triangle(s')$ future events of any accessed stream to be present in $R$.
Hence, $(s, v, t)$ is no longer needed by $s'$ once there are more than $\triangle(s') - k$ newer values of $s$ in $R$, since $s'$ will then access a strictly newer event of $s$.
Taking the maximum over all streams $s'$ and offsets $k$ with $(s', (\_, k), s) \in E$, and adding one to convert the count into a buffer size, yields exactly $\mu_\varphi(s)$.
Therefore, once $\mu(s)$ newer values of $s$ exist in $R$, the value $(s, v, t)$ can be safely removed without affecting any future evaluation.
\end{proof}

\section{Specifications}\label{app:specs}

\subsection{Prediction}
Example for prediction horizon of 3s.
\begin{lstlisting}
import math
input gps_lat : Float64
input gps_lon : Float64
input barometer_pressure : Float64
input barometer_altitude : Float64
input gps_altitude : Float64
output start_lat := start_lat.offset(by:-1).defaults(to: gps_lat)
output start_lon := start_lon.offset(by:-1).defaults(to: gps_lon)
output start_altitude := start_altitude.offset(by:-1).defaults(to: gps_altitude)
output distance_to_start := sqrt((gps_lat-start_lat)*(gps_lat-start_lat)
	+ (gps_lon-start_lon)*(gps_lon-start_lon))*10000.0
output altitude_above_ground := gps_altitude - start_altitude

output distance_to_start_timed @4Hz@ := distance_to_start.hold(or: 0.0)
output outside_geofence := distance_to_start_timed >= 8.0
output predicted_geofence @4Hz@ := distance_to_start_timed.predict(using: linear, over_discrete: 5, in: 3.0s).defaults(to: 0.0) >= 8.0
output geofence_true_positive  :=  outside_geofence &&  predicted_geofence.offset(by: -3s).defaults(to: false)
output geofence_false_positive := !outside_geofence &&  predicted_geofence.offset(by: -3s).defaults(to: false)
output geofence_true_negative  := !outside_geofence && !predicted_geofence.offset(by: -3s).defaults(to: false)
output geofence_false_negative :=  outside_geofence && !predicted_geofence.offset(by: -3s).defaults(to: false)

output altitude_above_ground_timed @4Hz@ := altitude_above_ground.hold(or: 0.0)
output altitude_violation := altitude_above_ground_timed >= 10.0
output predicted_altitude_violation @4Hz@ := altitude_above_ground_timed.predict(using: linear, over_discrete: 5, in: 3.0s).defaults(to: 0.0) >= 10.0
output altitude_true_positive  :=  altitude_violation &&  predicted_altitude_violation.offset(by: -3s).defaults(to: false)
output altitude_false_positive := !altitude_violation &&  predicted_altitude_violation.offset(by: -3s).defaults(to: false)
output altitude_true_negative  := !altitude_violation && !predicted_altitude_violation.offset(by: -3s).defaults(to: false)
output altitude_false_negative :=  altitude_violation && !predicted_altitude_violation.offset(by: -3s).defaults(to: false)
\end{lstlisting}

\subsection{Bounded Specification}

\begin{lstlisting}
import math
input gps_lat : Float64
input gps_lon : Float64
input barometer_pressure : Float64
input barometer_altitude : Float64
input gps_altitude : Float64

output start_lat := start_lat.offset(by:-1).defaults(to: gps_lat)
output start_lon := start_lon.offset(by:-1).defaults(to: gps_lon)
output start_altitude := start_altitude.offset(by:-1).defaults(to: gps_altitude)

output distance_to_start := sqrt((gps_lat-start_lat)*(gps_lat-start_lat)
    + (gps_lon-start_lon)*(gps_lon-start_lon))*10000.0
output altitude_above_ground := gps_altitude - start_altitude

output distance_to_start_timed @1Hz@ := distance_to_start.hold(or: 0.0)

output outside_geofence := distance_to_start_timed >= 8.0

trigger !(outside_geofence -> !outside_geofence.offset(by: x).defaults(to: false))
\end{lstlisting}

%% file: bibliography.bib
@inproceedings{DBLP:conf/time/DAngeloSSRFSMM05,
  author       = {Ben D'Angelo and
                  Sriram Sankaranarayanan and
                  C{\'{e}}sar S{\'{a}}nchez and
                  Will Robinson and
                  Bernd Finkbeiner and
                  Henny B. Sipma and
                  Sandeep Mehrotra and
                  Zohar Manna},
  title        = {{LOLA:} Runtime Monitoring of Synchronous Systems},
  booktitle    = {12th International Symposium on Temporal Representation and Reasoning
                  {(TIME} 2005), 23-25 June 2005, Burlington, Vermont, {USA}},
  pages        = {166--174},
  publisher    = {{IEEE} Computer Society},
  year         = {2005},
  doi          = {10.1109/TIME.2005.26},
  bibsource    = {dblp computer science bibliography, https://dblp.org}
}

@inproceedings{DBLP:conf/fm/BaumeisterFKS24,
  author       = {Jan Baumeister and
                  Bernd Finkbeiner and
                  Florian Kohn and
                  Frederik Scheerer},
  editor       = {Andr{\'{e}} Platzer and
                  Kristin Yvonne Rozier and
                  Matteo Pradella and
                  Matteo Rossi},
  title        = {A Tutorial on Stream-Based Monitoring},
  booktitle    = {Formal Methods - 26th International Symposium, {FM} 2024, Milan, Italy,
                  September 9-13, 2024, Proceedings, Part {II}},
  series       = {Lecture Notes in Computer Science},
  pages        = {624--648},
  publisher    = {Springer},
  year         = {2024},
  doi          = {10.1007/978-3-031-71177-0\_33},
  bibsource    = {dblp computer science bibliography, https://dblp.org}
}

@inproceedings{DBLP:conf/cav/BaumeisterCFS25,
  author       = {Jan Baumeister and
                  Arthur Correnson and
                  Bernd Finkbeiner and
                  Frederik Scheerer},
  editor       = {Ruzica Piskac and
                  Zvonimir Rakamaric},
  title        = {An Intermediate Program Representation for Optimizing Stream-Based
                  Languages},
  booktitle    = {Computer Aided Verification - 37th International Conference, {CAV}
                  2025, Zagreb, Croatia, July 23-25, 2025, Proceedings, Part {III}},
  series       = {Lecture Notes in Computer Science},
  volume       = {15933},
  pages        = {393--407},
  publisher    = {Springer},
  year         = {2025},
  doi          = {10.1007/978-3-031-98682-6\_20},
  bibsource    = {dblp computer science bibliography, https://dblp.org}
}

@phdthesis{DBLP:phd/dnb/Schwenger22,
  author       = {Maximilian Schwenger},
  title        = {Statically-analyzed stream monitoring for cyber-physical Systems},
  school       = {Saarland University, Saarbr{\"{u}}cken, Germany},
  year         = {2022},
  url          = {https://publikationen.sulb.uni-saarland.de/handle/20.500.11880/33890},
  urn          = {urn:nbn:de:bsz:291--ds-370140},
  bibsource    = {dblp computer science bibliography, https://dblp.org}
}

@inproceedings{DBLP:conf/sbmf/ConventHLS0T18,
  author       = {Lukas Convent and
                  Sebastian Hungerecker and
                  Martin Leucker and
                  Torben Scheffel and
                  Malte Schmitz and
                  Daniel Thoma},
  editor       = {Tiago Massoni and
                  Mohammad Reza Mousavi},
  title        = {{TeSSLa}: Temporal Stream-Based Specification Language},
  booktitle    = {Formal Methods: Foundations and Applications - 21st Brazilian Symposium,
                  {SBMF} 2018, Salvador, Brazil, November 26-30, 2018, Proceedings},
  series       = {Lecture Notes in Computer Science},
  pages        = {144--162},
  publisher    = {Springer},
  year         = {2018},
  doi          = {10.1007/978-3-030-03044-5\_10},
  bibsource    = {dblp computer science bibliography, https://dblp.org}
}

@phdthesis{DBLP:phd/dnb/Scheffel22,
  author       = {Torben Scheffel},
  title        = {Expressiveness and complexity of stream-based specification languages},
  school       = {University of L{\"{u}}beck, Germany},
  year         = {2022},
  url          = {https://www.zhb.uni-luebeck.de/epubs/ediss2525.pdf},
  urn          = {urn:nbn:de:gbv:841-20210907283},
  bibsource    = {dblp computer science bibliography, https://dblp.org}
}

@inproceedings{DBLP:conf/rv/GorostiagaS18,
  author       = {Felipe Gorostiaga and
                  C{\'{e}}sar S{\'{a}}nchez},
  editor       = {Christian Colombo and
                  Martin Leucker},
  title        = {Striver: Stream Runtime Verification for Real-Time Event-Streams},
  booktitle    = {Runtime Verification - 18th International Conference, {RV} 2018, Limassol,
                  Cyprus, November 10-13, 2018, Proceedings},
  series       = {Lecture Notes in Computer Science},
  pages        = {282--298},
  publisher    = {Springer},
  year         = {2018},
  doi          = {10.1007/978-3-030-03769-7\_16},
  bibsource    = {dblp computer science bibliography, https://dblp.org}
}

@article{DBLP:journals/sttt/GorostiagaS21,
  author       = {Felipe Gorostiaga and
                  C{\'{e}}sar S{\'{a}}nchez},
  title        = {Stream runtime verification of real-time event streams with the Striver
                  language},
  journal      = {Int. J. Softw. Tools Technol. Transf.},
  volume       = {23},
  number       = {2},
  pages        = {157--183},
  year         = {2021},
  doi          = {10.1007/S10009-021-00605-3},
  bibsource    = {dblp computer science bibliography, https://dblp.org}
}

@inproceedings{DBLP:conf/iros/ChouY020,
  author       = {Yi Chou and
                  Hansol Yoon and
                  Sriram Sankaranarayanan},
  title        = {Predictive Runtime Monitoring of Vehicle Models Using Bayesian Estimation
                  and Reachability Analysis},
  booktitle    = {{IEEE/RSJ} International Conference on Intelligent Robots and Systems,
                  {IROS} 2020, Las Vegas, NV, USA, October 24, 2020 - January 24, 2021},
  pages        = {2111--2118},
  publisher    = {{IEEE}},
  year         = {2020},
  doi          = {10.1109/IROS45743.2020.9340755},
  bibsource    = {dblp computer science bibliography, https://dblp.org}
}

@inproceedings{DBLP:conf/rv/CairoliBP21,
  author       = {Francesca Cairoli and
                  Luca Bortolussi and
                  Nicola Paoletti},
  editor       = {Lu Feng and
                  Dana Fisman},
  title        = {Neural Predictive Monitoring Under Partial Observability},
  booktitle    = {Runtime Verification - 21st International Conference, {RV} 2021, Virtual
                  Event, October 11-14, 2021, Proceedings},
  series       = {Lecture Notes in Computer Science},
  pages        = {121--141},
  publisher    = {Springer},
  year         = {2021},
  doi          = {10.1007/978-3-030-88494-9\_7},
  bibsource    = {dblp computer science bibliography, https://dblp.org}
}

@article{DBLP:journals/inffus/ChenMLWL23,
  author       = {Zonglei Chen and
                  Minbo Ma and
                  Tianrui Li and
                  Hongjun Wang and
                  Chongshou Li},
  title        = {Long sequence time-series forecasting with deep learning: {A} survey},
  journal      = {Inf. Fusion},
  volume       = {97},
  pages        = {101819},
  year         = {2023},
  doi          = {10.1016/J.INFFUS.2023.101819},
  bibsource    = {dblp computer science bibliography, https://dblp.org}
}

@article{Spinger:article/bmc/FilipowMTRDDLS23,
    author     = {Nicole Filipow and
                  Eleanor Main and
                  Gizem Tanriver and
                  Emma Raywood and
                  Gwyneth Davies and
                  Helen Douglas and
                  Aidan Laverty and
                  Sanja Stanojevic},
    title = {Exploring flexible polynomial regression as a method to align routine clinical outcomes with daily data capture through remote technologies},
    journal = {BMC Med Res Methodol},
    year = {2023},
    doi = {10.1186/s12874-023-01942-4}
}

@inproceedings{DBLP:conf/rv/FinkbeinerOPS20,
  author       = {Bernd Finkbeiner and
                  Stefan Oswald and
                  Noemi Passing and
                  Maximilian Schwenger},
  editor       = {Jyotirmoy Deshmukh and
                  Dejan Nickovic},
  title        = {Verified Rust Monitors for Lola Specifications},
  booktitle    = {Runtime Verification - 20th International Conference, {RV} 2020, Los
                  Angeles, CA, USA, October 6-9, 2020, Proceedings},
  series       = {Lecture Notes in Computer Science},
  pages        = {431--450},
  publisher    = {Springer},
  year         = {2020},
  doi          = {10.1007/978-3-030-60508-7\_24},
  bibsource    = {dblp computer science bibliography, https://dblp.org}
}

@inproceedings{DBLP:conf/rv/BozzelliS14,
  author       = {Laura Bozzelli and
                  C{\'{e}}sar S{\'{a}}nchez},
  editor       = {Borzoo Bonakdarpour and
                  Scott A. Smolka},
  title        = {Foundations of Boolean Stream Runtime Verification},
  booktitle    = {Runtime Verification - 5th International Conference, {RV} 2014, Toronto,
                  ON, Canada, September 22-25, 2014. Proceedings},
  series       = {Lecture Notes in Computer Science},
  pages        = {64--79},
  publisher    = {Springer},
  year         = {2014},
  doi          = {10.1007/978-3-319-11164-3\_6},
  bibsource    = {dblp computer science bibliography, https://dblp.org}
}

@inproceedings{DBLP:conf/atva/RaszykBKT19,
  author       = {Martin Raszyk and
                  David A. Basin and
                  Srdan Krstic and
                  Dmitriy Traytel},
  editor       = {Yu{-}Fang Chen and
                  Chih{-}Hong Cheng and
                  Javier Esparza},
  title        = {Multi-head Monitoring of Metric Temporal Logic},
  booktitle    = {Automated Technology for Verification and Analysis - 17th International
                  Symposium, {ATVA} 2019, Taipei, Taiwan, October 28-31, 2019, Proceedings},
  series       = {Lecture Notes in Computer Science},
  pages        = {151--170},
  publisher    = {Springer},
  year         = {2019},
  doi          = {10.1007/978-3-030-31784-3\_9},
  bibsource    = {dblp computer science bibliography, https://dblp.org}
}

@article{DBLP:journals/tosem/BauerLS11,
  author       = {Andreas Bauer and
                  Martin Leucker and
                  Christian Schallhart},
  title        = {Runtime Verification for {LTL} and {TLTL}},
  journal      = {{ACM} Trans. Softw. Eng. Methodol.},
  volume       = {20},
  number       = {4},
  pages        = {14:1--14:64},
  year         = {2011},
  doi          = {10.1145/2000799.2000800},
  bibsource    = {dblp computer science bibliography, https://dblp.org}
}

@inproceedings{DBLP:conf/iccps/LindemannQDP23,
  author       = {Lars Lindemann and
                  Xin Qin and
                  Jyotirmoy V. Deshmukh and
                  George J. Pappas},
  editor       = {Sayan Mitra and
                  Nalini Venkatasubramanian and
                  Abhishek Dubey and
                  Lu Feng and
                  Mahsa Ghasemi and
                  Jonathan Sprinkle},
  title        = {Conformal Prediction for {STL} Runtime Verification},
  booktitle    = {Proceedings of the {ACM/IEEE} 14th International Conference on Cyber-Physical
                  Systems, {ICCPS} 2023, (with CPS-IoT Week 2023), San Antonio, TX,
                  USA, May 9-12, 2023},
  pages        = {142--153},
  publisher    = {{ACM}},
  year         = {2023},
  doi          = {10.1145/3576841.3585927},
  bibsource    = {dblp computer science bibliography, https://dblp.org}
}

@inproceedings{DBLP:conf/rv/Leucker12,
  author       = {Martin Leucker},
  editor       = {Shaz Qadeer and
                  Serdar Tasiran},
  title        = {Sliding between Model Checking and Runtime Verification},
  booktitle    = {Runtime Verification, Third International Conference, {RV} 2012, Istanbul,
                  Turkey, September 25-28, 2012, Revised Selected Papers},
  series       = {Lecture Notes in Computer Science},
  pages        = {82--87},
  publisher    = {Springer},
  year         = {2012},
  doi          = {10.1007/978-3-642-35632-2\_10},
  bibsource    = {dblp computer science bibliography, https://dblp.org}
}

@inproceedings{DBLP:conf/nfm/ZhangLD12,
  author       = {Xian Zhang and
                  Martin Leucker and
                  Wei Dong},
  editor       = {Alwyn Goodloe and
                  Suzette Person},
  title        = {Runtime Verification with Predictive Semantics},
  booktitle    = {{NASA} Formal Methods - 4th International Symposium, {NFM} 2012, Norfolk,
                  VA, USA, April 3-5, 2012. Proceedings},
  series       = {Lecture Notes in Computer Science},
  pages        = {418--432},
  publisher    = {Springer},
  year         = {2012},
  doi          = {10.1007/978-3-642-28891-3\_37},
  bibsource    = {dblp computer science bibliography, https://dblp.org}
}

@article{DBLP:journals/jacm/BasinKMZ15,
  author       = {David A. Basin and
                  Felix Klaedtke and
                  Samuel M{\"{u}}ller and
                  Eugen Zalinescu},
  title        = {Monitoring Metric First-Order Temporal Properties},
  journal      = {J. {ACM}},
  volume       = {62},
  number       = {2},
  pages        = {15:1--15:45},
  year         = {2015},
  doi          = {10.1145/2699444},
  bibsource    = {dblp computer science bibliography, https://dblp.org}
}

@inproceedings{DBLP:conf/rv/SchumannMR15,
  author       = {Johann Schumann and
                  Patrick Moosbrugger and
                  Kristin Y. Rozier},
  editor       = {Ezio Bartocci and
                  Rupak Majumdar},
  title        = {{R2U2:} Monitoring and Diagnosis of Security Threats for Unmanned
                  Aerial Systems},
  booktitle    = {Runtime Verification - 6th International Conference, {RV} 2015 Vienna,
                  Austria, September 22-25, 2015. Proceedings},
  series       = {Lecture Notes in Computer Science},
  pages        = {233--249},
  publisher    = {Springer},
  year         = {2015},
  doi          = {10.1007/978-3-319-23820-3\_15},
  bibsource    = {dblp computer science bibliography, https://dblp.org}
}

@inproceedings{DBLP:conf/tacas/BaumeisterFSSW25,
  author       = {Jan Baumeister and
                  Bernd Finkbeiner and
                  Frederik Scheerer and
                  Julian Siber and
                  Tobias Wagenpfeil},
  editor       = {Arie Gurfinkel and
                  Marijn Heule},
  title        = {Stream-Based Monitoring of Algorithmic Fairness},
  booktitle    = {Tools and Algorithms for the Construction and Analysis of Systems
                  - 31st International Conference, {TACAS} 2025, Held as Part of the
                  International Joint Conferences on Theory and Practice of Software,
                  {ETAPS} 2025, Hamilton, ON, Canada, May 3-8, 2025, Proceedings, Part
                  {I}},
  series       = {Lecture Notes in Computer Science},
  pages        = {60--81},
  publisher    = {Springer},
  year         = {2025},
  doi          = {10.1007/978-3-031-90643-5\_4},
  bibsource    = {dblp computer science bibliography, https://dblp.org}
}

@inproceedings{DBLP:conf/lop/LichtensteinPZ85,
  author       = {Orna Lichtenstein and
                  Amir Pnueli and
                  Lenore D. Zuck},
  editor       = {Rohit Parikh},
  title        = {The Glory of the Past},
  booktitle    = {Logics of Programs, Conference, Brooklyn College, New York, NY, USA,
                  June 17-19, 1985, Proceedings},
  series       = {Lecture Notes in Computer Science},
  pages        = {196--218},
  publisher    = {Springer},
  year         = {1985},
  doi          = {10.1007/3-540-15648-8\_16},
  bibsource    = {dblp computer science bibliography, https://dblp.org}
}

@inproceedings{DBLP:conf/time/ArtaleGGMM23,
  author       = {Alessandro Artale and
                  Luca Geatti and
                  Nicola Gigante and
                  Andrea Mazzullo and
                  Angelo Montanari},
  editor       = {Alexander Artikis and
                  Florian Bruse and
                  Luke Hunsberger},
  title        = {{LTL} over Finite Words Can Be Exponentially More Succinct Than Pure-Past
                  LTL, and vice versa},
  booktitle    = {30th International Symposium on Temporal Representation and Reasoning,
                  {TIME} 2023, {NCSR} Demokritos, Athens, Greece, September 25-26, 2023},
  series       = {LIPIcs},
  pages        = {2:1--2:14},
  publisher    = {Schloss Dagstuhl - Leibniz-Zentrum f{\"{u}}r Informatik},
  year         = {2023},
  doi          = {10.4230/LIPICS.TIME.2023.2},
  bibsource    = {dblp computer science bibliography, https://dblp.org}
}

@inproceedings{DBLP:conf/tacas/HavelundR02,
  author       = {Klaus Havelund and
                  Grigore Rosu},
  editor       = {Joost{-}Pieter Katoen and
                  Perdita Stevens},
  title        = {Synthesizing Monitors for Safety Properties},
  booktitle    = {Tools and Algorithms for the Construction and Analysis of Systems,
                  8th International Conference, {TACAS} 2002, Held as Part of the Joint
                  European Conference on Theory and Practice of Software, {ETAPS} 2002,
                  Grenoble, France, April 8-12, 2002, Proceedings},
  series       = {Lecture Notes in Computer Science},
  pages        = {342--356},
  publisher    = {Springer},
  year         = {2002},
  doi          = {10.1007/3-540-46002-0\_24},
  bibsource    = {dblp computer science bibliography, https://dblp.org}
}

@inproceedings{DBLP:conf/rv/BaumeisterFS25,
  author       = {Jan Baumeister and
                  Bernd Finkbeiner and
                  Frederik Scheerer},
  editor       = {Bettina K{\"{o}}nighofer and
                  Hazem Torfah},
  title        = {Active Monitoring with {RTLola}: {A} Specification-Guided Scheduling
                  Approach},
  booktitle    = {Runtime Verification - 25th International Conference, {RV} 2025, Graz,
                  Austria, September 15-19, 2025, Proceedings},
  series       = {Lecture Notes in Computer Science},
  pages        = {181--201},
  publisher    = {Springer},
  year         = {2025},
  doi          = {10.1007/978-3-032-05435-7\_11},
  bibsource    = {dblp computer science bibliography, https://dblp.org}
}

@article{baumeister2025intermediate,
  title={An Intermediate Program Representation for Optimizing Stream-Based Languages}, 
  author={Jan Baumeister and Arthur Correnson and Bernd Finkbeiner and Frederik Scheerer},
  year={2025},
  eprint={2504.21458},
  archivePrefix={arXiv},
  primaryClass={cs.LO},
  url={https://arxiv.org/abs/2504.21458}, 
}

@inproceedings{DBLP:conf/cav/HiplerKLS24,
  author       = {Raik Hipler and
                  Hannes Kallwies and
                  Martin Leucker and
                  C{\'{e}}sar S{\'{a}}nchez},
  editor       = {Arie Gurfinkel and
                  Vijay Ganesh},
  title        = {General Anticipatory Runtime Verification},
  booktitle    = {Computer Aided Verification - 36th International Conference, {CAV}
                  2024, Montreal, QC, Canada, July 24-27, 2024, Proceedings, Part {II}},
  series       = {Lecture Notes in Computer Science},
  volume       = {14682},
  pages        = {133--155},
  publisher    = {Springer},
  year         = {2024},
  doi          = {10.1007/978-3-031-65630-9\_7},
  bibsource    = {dblp computer science bibliography, https://dblp.org}
}
